\documentclass[12pt,oneside,a4paper]{article}
\usepackage{graphicx}
\usepackage{amssymb}
\usepackage{amsthm}
\usepackage{amsmath}
\usepackage{mathtools}
\usepackage{tikz}
\usetikzlibrary{shapes}
\usetikzlibrary{patterns}
\usetikzlibrary{arrows,positioning,decorations.pathmorphing,trees}
\usetikzlibrary{arrows.meta}
\usepackage{cancel}
\usepackage{tikz-cd}
\makeatletter
\tikzset{
  column sep/.code=\def\pgfmatrixcolumnsep{\pgf@matrix@xscale*(#1)},
  row sep/.code   =\def\pgfmatrixrowsep{\pgf@matrix@yscale*(#1)},
  matrix xscale/.code=
    \pgfmathsetmacro\pgf@matrix@xscale{\pgf@matrix@xscale*(#1)},
  matrix yscale/.code=
    \pgfmathsetmacro\pgf@matrix@yscale{\pgf@matrix@yscale*(#1)},
  matrix scale/.style={/tikz/matrix xscale={#1},/tikz/matrix yscale={#1}}}
\def\pgf@matrix@xscale{1}
\def\pgf@matrix@yscale{1}
\makeatother
\usepackage{stackengine}
\usepackage{pgfplots}
\pgfplotsset{compat=1.18}
\usepgfplotslibrary{groupplots}
\usepackage{float}
\usepackage{braket}
\usepackage{setspace}
\usepackage{fullpage}
\usepackage{newpxtext,newpxmath}
\usepackage{comment}
\usepackage{enumitem}
\usepackage[round]{natbib}
\usepackage{hyperref}
\usepackage[capitalize]{cleveref}
\hypersetup{
    colorlinks=true,
    linkcolor=black,
    citecolor=black
    }

\usepackage{manyfoot}
\SetFootnoteHook{\hspace*{-1.8em}}
\DeclareNewFootnote{bl}[gobble]
\usepackage{algorithm}
\usepackage{algpseudocodex}

\newcommand{\mathmagnifier}{\mathord{%
  \tikz[baseline=-0.6ex, x=1ex,y=1ex]{
    \draw[line width=0.12ex] (0,0) circle (0.4);
    \draw[line width=0.12ex] (0.28,-0.28) -- (0.84,-0.84);
  }%
}}

\newtheorem{theorem}{Theorem}

\newtheorem{corollary}{Corollary}

\theoremstyle{definition}
\newtheorem{fact}{Fact}
\newtheorem{reduction}{Reduction}

\newtheorem{lemma}{Lemma}

\newtheorem{proposition}[theorem]{Proposition}
\newtheorem{definition}{Definition}

\newtheorem{observation}{Observation}

\newtheorem*{remark*}{Remark}
\newtheorem*{inftheorem*}{Theorem (informal)} 
\newtheorem*{notation*}{Notation}
\newtheorem*{observation*}{Observation}
\newtheorem*{theorem*}{Main theorem}
\newtheorem*{definition*}{Definition}
\newtheorem*{axiom*}{Axiom}
\newtheorem*{claim*}{Claim}
\newtheorem*{lemma*}{Lemma}

\newcommand{\push}{\mathtt{push}}
\newcommand{\pull}{\mathtt{pull}}

\newcommand{\PNEexists}{\mathtt{PNE}^{?}}
\newcommand{\NWRexists}{\mathtt{NWR}^{?}}
\newcommand{\SEexists}{\mathtt{\langle \alpha,\beta \rangle}^{\!?}} 
\newcommand{\circuitSATexists}{\mathtt{CircuitSAT}^{?}}

\newcommand{\PNEfind}{\mathtt{PNE}^{\mathmagnifier}}
\newcommand{\NWRfind}{\mathtt{NWR}^{\mathmagnifier}}
\newcommand{\SEfind}{\mathtt{\langle \alpha,\beta \rangle}^{\!\mathmagnifier}}

\newcommand{\PNEcount}{\mathtt{PNE}^{\#}}
\newcommand{\NWRcount}{\mathtt{NWR}^{\#}}
\newcommand{\SEcount}{\mathtt{\langle \alpha,\beta \rangle}^{\#}} 
\newcommand{\circuitSATcount}{\mathtt{CircuitSAT}^{\#}}

\newcommand{\Pclass}{\mathtt{P}} 
\newcommand{\RPclass}{\mathtt{RP}} 
\newcommand{\Pcount}{\#\mathtt{P}}
\newcommand{\Pcounthard}{\#\mathtt{P\text{-}hard}}
\newcommand{\Pcountcomplete}{\#\mathtt{P\text{-}complete}}
\newcommand{\NP}{\mathtt{NP}}
\newcommand{\NPhard}{\mathtt{NP\text{-}hard}}
\newcommand{\NPcomplete}{\mathtt{NP\text{-}complete}}
\newcommand{\PLS}{\mathtt{PLS}}
\newcommand{\PLShard}{\mathtt{PLS\text{-}hard}}
\newcommand{\PLScomplete}{\mathtt{PLS\text{-}complete}}

\newcommand{\hatm}{\hat{m}}
\newcommand{\Games}{\mathbf{G}}

\newcommand{\In}{\mathtt{In}}
\newcommand{\Out}{\mathtt{Out}}

\title{\Large{The Computational Intractability of Not Worst Responding}}

\author{%
\begin{tabular*}{0.95\textwidth}{@{}l@{\extracolsep{\fill}}r@{}}
\normalsize Mete \c{S}eref Ahunbay$^{\dagger}$              &  \\
\normalsize Paul W. Goldberg$^{\dagger}$                    & \\
\normalsize Edwin Lock$^{\ddagger}$                         &  \\
\normalsize Panayotis Mertikopoulos$^{\ast}$                & \small $^{\ast}$CNRS\\
\normalsize Bary S. R. Pradelski$^{\dagger \ast}$           &\small $^{\ddagger}$King's College London  \\
\normalsize Bassel Tarbush$^{\dagger}$                      & \small $^{\dagger}$University of Oxford
\end{tabular*}
}

\date{\normalsize \today}

\makeatletter
\renewenvironment{abstract}
  {\list{}{\leftmargin=0pt \rightmargin=0pt}\item\relax}
  {\endlist}
\makeatother

\begin{document}
\maketitle
\vspace{-1cm}
\onehalfspacing
\begin{abstract}
\noindent Finding, counting, or determining the existence of Nash equilibria, where players must play optimally given each others' actions, are known to be computational intractable problems. We ask whether weakening optimality to the requirement that each player merely avoid worst responses---arguably the weakest meaningful rationality criterion---yields tractable solution concepts. We show that it does not: any solution concept with this minimal guarantee is ``as intractable'' as pure Nash equilibrium. In general games, determining the existence of \emph{no-worst-response} action profiles is $\NPcomplete$, finding one is $\NPhard$, and counting them is $\Pcountcomplete$. In potential games, where existence is guaranteed, the search problem is $\PLScomplete$. Computational intractability therefore stems not only from the requirement of optimality, but also from the requirement of a minimal rationality guarantee \emph{for each player}. Moreover, relaxing the latter requirement gives rise to a tractability trade-off between the strength of individual rationality guarantees and the fraction of players satisfying them.
\end{abstract}

\footnotebl{\emph{Keywords}: best responses, not-worst responses, minimal rationality, satisficing equilibrium, pure Nash equilibrium, normal form game, NP-completeness, PLS-completeness}

\footnotebl{\emph{Contact}: \texttt{mete.ahunbay@cs.ox.ac.uk}, \texttt{paul.goldberg@cs.ox.ac.uk}, 
\texttt{edwin.lock@kcl.ac.uk},\\
\texttt{panayotis.mertikopoulos@imag.fr},
\texttt{bary.pradelski@cnrs.fr},
\texttt{bassel.tarbush@economics.ox.ac.uk}
}

\footnotebl{\emph{Acknowledgements}: PG was supported by UKRI/EPSRC grants EP/X040461/1 and EP/X038548/1. M\c{S}F, PM, and BSRP were supported by PEPR project FOUNDRY, ANR23-PEIA-0003.}

\section{Introduction}

It is well-known that determining the existence, finding, and counting Nash equilibria in games is computationally intractable \citep*[cf., ][]{gottlob2003pure,FabrikantPT04,chen2009settling,daskalakis2009complexity}. Nash equilibrium requires each player to select an optimal action in response to the actions of the other players. We ask whether weakening optimality to merely requiring each player to avoid \emph{worst} responses---arguably the weakest meaningful rationality criterion---yields tractable solution concepts. It does not. We study succinctly represented games with pure actions and find: 
\begin{quotation}
    \noindent \emph{Any solution concept that guarantees that no player plays a worst response to the actions of the other players is computationally intractable.}
\end{quotation}
In particular, determining the existence of such action profiles, finding one, or counting how many exist is computationally equivalent to the same problems for pure Nash equilibrium. Even for potential games, finding such action profiles is intractable.

Avoiding worst responses is a weak individual rationality requirement. For example, in a congestion game, a player may simply seek to avoid her worst route rather than optimize---yet it is generally intractable to find an action profile in which no player worst-responds. Likewise, even if players were to jointly seek an action profile that all can regard as acceptable---that is, an action profile in which no one plays a worst response---identifying such an agreement is computationally intractable. Taken together, our results suggest that the source of intractability does not stem merely from optimization (as required by Nash equilibrium) but also from requiring a minimal rationality guarantee for each player.

Computational complexity has been argued to provide a minimal plausibility requirement for economic solution concepts: predictions that rely on solving fundamentally hard problems may be descriptively or normatively untenable  \citep*{gilboa1989nash,camara}. 
In game theory, this perspective has put in question the plausibility of (pure) Nash equilibrium, as the concept has been shown to be computationally intractable in a variety of game classes (see discussion in \Cref{sec:lit} below). What we show is that any solution concept that requires all players to not worst-respond fails this fundamental plausibility requirement.

\subsection{Our contribution}

We provide a complexity-theoretic characterization of minimal individual rationality guarantees in succinctly represented finite games played with pure strategies. 
\footnote{More precisely, we study circuit games. These are games in which the payoff functions are represented by Boolean circuits. See \cref{def:circuit-games} for details.}
We show that relaxing optimality alone does not restore computational tractability: computational hardness persists whenever minimal guarantees apply ``universally''; that is, to all players simultaneously. We then determine the extent to which universality must be relaxed in order to recover computational tractability.

Concretely, our results establish the following: in succinctly represented games with an arbitrarily fixed number of actions per player, deciding whether there exists an action profile in which no player plays a worst response is computationally as hard as deciding the existence of a pure Nash equilibrium (\Cref{thm:main}, $\NPcomplete$). The same equivalence holds for finding such an action profile (\Cref{cor:findNP}, $\NPhard$) and for counting the number of such action profiles (\Cref{cor:NumberHashP}, $\Pcountcomplete$). 
The $\NP$ hardness proofs rely on a reduction from $\circuitSATexists$. Note that, when each player has two actions, our results follow from known hardness results for pure Nash equilibrium. Our analysis reveals that even with many actions, finding action profiles in which no player plays a worst response is computationally hard.

We then turn our attention to potential games, a class of games with many economic and technical applications. Although potential games are guaranteed to admit pure Nash equilibria \citep*{rosenthal1973class,monderer1996potential}, we show that finding an action profile in which no player plays a worst response is computationally equivalent to finding a pure Nash equilibrium (\Cref{thm:potential-complexity}, $\PLScomplete$). A new reduction is required to prove this result because the one from $\circuitSATexists$ does not preserve potentialness.
A key ingredient to our second reduction is the use of circuit gadgets whose existence we prove using the probabilistic method and the Lovász Local Lemma.
     
Having established that action profiles with universal guarantees; that is, guarantees for all players, are computationally hard for a variety of problems, we turn to non-universal guarantees. When guarantees are required only for a fraction of players, we identify a sharp threshold separating computational tractability from hardness. Specifically, the problem of finding an action profile in which a fraction at least $\alpha$ of players choose an action that is in their top-$\beta$ fraction of actions is computationally tractable when $\alpha <  \beta$ (\Cref{thm:threshold}, $\RPclass$), and computationally hard when $\alpha > \beta$ (\Cref{thm:threshold}, $\NPhard$).\footnote{We note that the precise formulation of \Cref{thm:threshold} leaves a small wedge between the two cases.} In other words, there is a threshold along which there is a tractability trade-off between individual rationality guarantees and the fraction of players satisfying them.

From a technical perspective, our results rely on reductions between decision, search, and counting problems in games, and on standard complexity classes such as $\NP$, $\PLS$, and $\RPclass$. We provide a brief overview of these notions in \cref{sec:compcomplexity}. In particular, our reductions show that avoiding worst responses is computationally equivalent to pure Nash equilibrium across multiple complexity-theoretic dimensions.

\subsection{Literature}\label{sec:lit}

Our results relate to several strands of the literature on computational complexity in game theory, bounded rationality, and approximate equilibrium concepts.

\paragraph{Computational complexity of equilibrium.}
Throughout the computational game theory literature, complexity results are stated for games given in a \emph{succinct representation}, typically by polynomial-size Boolean circuits that compute players’ payoffs. This convention rules out trivial hardness arising from exponentially large payoff tables and allows 
 computational hardness to be interpreted as a genuine feasibility constraint rather than an artefact of input size.

The computational study of equilibrium in games originates in the seminal work of \citet*{gilboa1989nash}, who showed that certain equilibrium-related problems in finite games can be computationally intractable.
Subsequent work identified $\texttt{PPAD}$ as the relevant complexity class for equilibrium computation \citep*{papadimitriou2001algorithms} and showed that finding mixed Nash equilibria is $\texttt{PPAD}$-hard even in small games \citep*{chen2009settling,daskalakis2009complexity}.

For pure strategies, deciding whether a pure Nash equilibrium exists in a general succinctly represented finite game is $\NP$-complete and the associated search problem is $\NP$-hard \citep*{gottlob2003pure}. Even in the class of potential games, where pure Nash equilibria are guaranteed to exist \citep*{rosenthal1973class,monderer1996potential}, finding a pure Nash equilibrium is $\PLScomplete$ \citep*{FabrikantPT04}, and best-response dynamics require exponential convergence time in the worst case \citep*{schaffer1991simple,anshelevich2008price}. Our results show that action profiles in which no player worst-responds (in pure strategies) are computationally equivalent to pure Nash equilibrium along the same complexity dimensions.

 Related work studies the approximate notion of $\epsilon$-equilibrium \citep*{radner1980collusive}. For mixed strategies, computing an $\epsilon$-equilibrium remains intractable for sufficiently small $\epsilon$
\citep*[cf., e.g.,][]{daskalakis2006note,rubinstein2015inapproximability,schoenebeck2012computational}. For pure strategies, \citet*{schoenebeck2012computational} establish several hardness results for pure $\epsilon$-equilibria. In this literature, $\epsilon$ must be interpreted relative to the scale of utilities, so the restrictiveness of the concept varies across games, players, and action profiles. In contrast, in line with the ordinal nature of pure Nash equilibrium, our notion of not worst-responding provides a universal guarantee across games, action profiles, and players.

Finally, \citet*{BabichenkoPR16} study approximate equilibria in which only a fraction of players may have large regret and show that this relaxation does not generally restore tractability. Our results on partial guarantees relate to this line of work while focusing on pure actions. Taken together, this literature shows that computational hardness is a pervasive feature of equilibrium concepts, affecting mixed, pure, and approximate formulations alike.

\paragraph{Complexity in economics.} Complexity considerations have a long tradition in economic analysis. For example,  computation models have been studied in the context of repeated games \citep*{neyman1985bounded,rubinstein1986finite,kalai1988finite}, choice \citep*{salant2011procedural}, and learning \citep*{wilson2014bounded}.  Many articles in this strand of the literature rely on finite automata.  \citet*{camara} broke new ground by studying choice in a decision-theoretic setting through the lens of computational complexity theory. We adopt this approach to study games.

\paragraph{Bounded rationality.} 
Bounded rationality, or behaviour that is not consistent with optimisation over all possible actions, has been documented in a wide range of settings \citep*[cf.][for recent surveys]{artinger2022satisficing,Clippel2024}. In games, a variety of solution concepts aim to capture boundedly rational behaviour, including $\epsilon$-equilibrium \citep*{radner1980collusive}, and $\mathbf k$-satisficing equilibrium \citep*{pradelski2024satisficing}, where $\mathbf k=(k_i)_i$ is a vector of positive integers, one for each player $i$. The latter relaxes pure Nash equilibrium by requiring each player $i$ to play one of her top $k_i$ actions in response to the actions of the other players. Our results show that if, for each player $i$, $k_i$ is strictly smaller than the number of $i$'s actions, then the problem of determining the existence of, finding, or counting such action profiles is generally intractable.

\section{Preliminaries}

We introduce our notation and recall standard notions from game theory and computational complexity theory.

\subsection{Games}

A \emph{game} is a tuple
\[
\big(\,N, \,\{A_i\}_{i \in N}, \,\{u_i\}_{i \in N}\, \big)
\]
consisting of a set of players $N = \{1, \ldots, n\}$, a set of actions $A_i = \{0, \ldots, m_i-1\}$ for each player $i \in N$, as well as a payoff function $u_i: A \to \mathbb{R}$ for each $i \in N$, where $A := \times_{i \in N} A_i$ denotes the set of all \emph{action profiles}. In a standard abuse of notation, we write $(a_i,a_{-i}) \in A$ for the action profile in which player $i$ plays action $a_i \in A_i$ and $a_{-i} \in \times_{j \in N \setminus \{i\}} A_j$ denotes the actions played by the players in $N \setminus \{i\}$.

We say that an action $a_i \in A_i$ is a \emph{best-response} to $a_{-i}$ for player $i$ if $u_i(a_i,a_{-i}) \geq u_i(a_i',a_{-i})$ for all $a_i' \in A_i$, and it is a \emph{worst-response} if $u_i(a_i,a_{-i}) < u_i(a_i',a_{-i})$ for all $a_i' \in A_i \setminus \{a_i\} $. For any $\beta \in [0,1]$ and any action profile $a$, we say that $a_i \in A_i$ is a \emph{top $\beta$ action} for player $i$ in response to $a_{-i}$ if the number of actions $a_i'$ for which $u_i(a_i',a_{-i}) > u_i(a_i,a_{-i})$ is less than $\beta \cdot m_i$. In other words, $a_i$ is a top $\beta$ action for player $i$ if it is one of the best $\lceil \beta \cdot m_i \rceil$ actions of player $i$ in response to $a_{-i}$.

An action profile $a \in A$ is a \emph{pure Nash equilibrium} if, for each player $i \in N$, $a_i$ is a best-response to $a_{-i}$. An action profile $a \in A$ is a \emph{no-worst-response action profile} if, for each $i \in N$, $a_i$ is not a worst-response to $a_{-i}$. Finally, for any game $(N, \{A_i\}_{i \in N}, \{u_i\}_{i \in N})$, action profile $a \in A$, and any $\alpha, \beta \in [0,1]$, we say that a \emph{fraction at least $\alpha$ of players play a top $\beta$ action} at $a$ if there is a set $S \subseteq N$ with $|S| \geq \lceil \alpha \cdot n \rceil$  such that, for each player $i \in S$, $a_i$ is a top $\beta$ action in response to $a_{-i}$. 

\subsection{Boolean circuits and circuit games}
\label{def:circuit-games}

A \textit{Boolean circuit} $C$ represents a multivariate function that takes $n$ binary inputs, $(x_1, \ldots, x_n)$, with $1$ representing \textit{true} and $0$ representing \textit{false}. The circuit has one or more outputs, which are computed through a sequence of three types of elementary logical gates: AND ($\land$), OR ($\lor$), and NOT $(\lnot)$. The fan-in of each gate is at most $2$, so it accepts no more than two input wires. The circuit forms a directed acyclic graph: the $n$ input bits feed into gates, whose outputs feed into subsequent gates, until the final gates produces the circuit's outputs. See \Cref{fig:circuit} for an example with one output gate. The \textit{size} of the circuit $C$ is the number of its gates. We write $\tau \in \{0,1\}^n$ for a truth assignment to the inputs $x_1, \ldots, x_n$, and $C(\tau)$ for the output of circuit $C$ given input $\tau$. When $C$ is a circuit with one output, then we say that truth assignment $\tau$ \textit{satisfies} $C$ if $C(\tau) = 1$.

Circuits will also provide the computational model for our games. Specifically, we work within the formalism of \citet*{schoenebeck2012computational} and assume throughout that our games are circuit games, that is, the utility functions $u_i$ in our games are represented by Boolean circuits have multiple outputs that encode two integers whose quotient represent rational payoffs. We let $\Games$ denote the set of all such games,
and $\mathbf{A}(\Games)$ be the set of all action profiles of all games in $\Games$. 

\begin{definition}[\citealt*{feigenbaum1995game,schoenebeck2012computational}]
A \emph{circuit game} is a game $ (N, \{A_i\}_{i \in N}, \{u_i\}_{i \in N})$ in which the number of actions $m_i$ of each agent are represented by binary strings, and every payoff function $u_i$ is encoded by a circuit $C_i$ with $k_i$ output gates, $\sum_{i \in N} s_i$ inputs such that
$A_i \subseteq \{0,1\}^{s_i}$ for each player $i$ and $C_i(a) = u_i(a)$ for all $a \in A$. We assume that the payoffs of each $u_i$ are the quotient of two integers encoded (by binary strings) by the output of $C_i$, and thus the payoffs are rational numbers.
\end{definition}

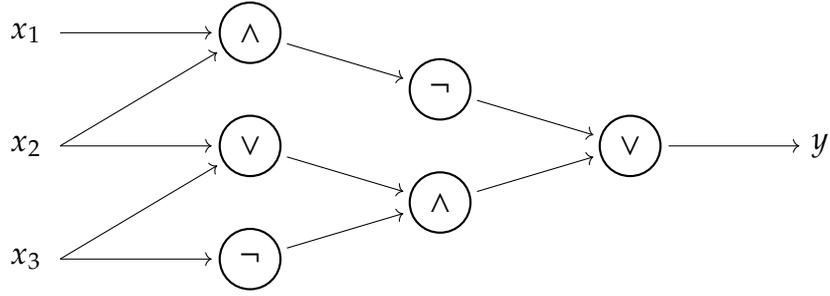
\begin{figure}
    \begin{center}
    \begin{tikzpicture}[
    xscale=2.5,
    yscale=1.5,
    gate/.style={circle, draw, thick, minimum size=0.8cm, inner sep=3pt, outer sep=3pt},
    input/.style={coordinate}
]

\node[input] (x1) at (0, 2) {};
\node[input] (x2) at (0, 1) {};
\node[input] (x3) at (0, 0) {};

\node[left=0.1cm of x1] {$x_1$};
\node[left=0.1cm of x2] {$x_2$};
\node[left=0.1cm of x3] {$x_3$};

\node[gate] (g1) at (1, 2) {$\land$};
\node[gate] (g2) at (1, 1) {$\lor$};
\node[gate] (g3) at (1, 0) {$\lnot$};

\node[gate] (g4) at (2, 1.5) {$\lnot$};
\node[gate] (g5) at (2, 0.5) {$\land$};

\node[gate] (g6) at (3, 1) {$\lor$};

\node[] (g7) at (4, 1) {$y$};

\path[->] (x1) edge (g1);
\path[->] (x2) edge (g1);
\path[->] (x2) edge (g2);
\path[->] (x3) edge (g2);
\path[->] (x3) edge (g3);

\path[->] (g1) edge (g4);
\path[->] (g2) edge (g5);
\path[->] (g3) edge (g5);

\path[->] (g4) edge (g6);
\path[->] (g5) edge (g6);

\path[->] (g6) edge (g7);

\end{tikzpicture}
    \end{center}
    \caption{
        A Boolean circuit of size $6$ and depth $3$ with three inputs $(x_1, x_2, x_3)$ and one output. The gates are $\land$ (AND), $\lor$ (OR), and $\lnot$ (NOT).
        Truth assignment $\tau = (1,1,1)$ evaluates to $0$ (false), while assignment $\tau' = (1,1,0)$ evaluates to $1$ (true).
    }
    \label{fig:circuit}
\end{figure}

\subsection{Computational problems}

For our computational problems, we provide a high-level description of the central concepts (without, for example, concerning ourselves with encodings into languages). We identify a \emph{computational problem} with a relation $f \subseteq  \In \times \Out$ whose inputs consist of elements of a set $\In$ and whose outputs are elements of a set $\Out$. The `circuit satisfiability' problem, for example, is the relation $\circuitSATexists \subseteq \Sigma \times \{0,1\}$ that maps from the set of all Boolean circuits $\Sigma$ to $\{0,1\}$, where $(C,1) \in \circuitSATexists$ if and only if circuit $C$ has a satisfying truth assignment.

In the context of games, we will be primarily interested in three classes of computational problems. The first are decision problems (denoted $?$), the second are counting problems (denoted $\#$), and the third are search problems (denoted $\mathmagnifier$). 

\begin{definition}[Decision problems: $\PNEexists$, $\NWRexists$, and $\SEexists$]\ \\
The problem of deciding whether there exists a pure Nash equilibrium is the relation $\PNEexists \subseteq \Games \times \{0,1\}$ where $(g,1)\in \PNEexists$ if and only if $g \in \Games$ admits a pure Nash equilibrium. 
Similarly, $\NWRexists \subseteq \Games \times \{0,1\}$ denotes the problem of deciding whether there exists a no-worst-response action profile. 
Finally, for any $\alpha, \beta \in [0,1]$, $\SEexists \subseteq \Games \times \{0,1\}$ denotes the problem of deciding whether there exists an action profile in which a fraction at least $\alpha$ of players play a top $\beta$ action.
\end{definition}

\begin{definition}[Search problems: $\PNEfind$, $\NWRfind$, and $\SEfind$]\ \\
The problem of finding a pure Nash equilibrium is the relation $\PNEfind \subseteq \Games \times \mathbf{A}(\Games)$ defined by $(g,a)\in \PNEfind$ if and only if the action profile $a$ is a pure Nash equilibrium of $g\in \Games$. An algorithm solves $\PNEfind$ if, on input $g\in\Games$, it outputs some $a$ such that $(g,a)\in\PNEfind$ or it returns ``infeasible'' if the game $g$ does not have a pure Nash equilibrium. 
The problem $\NWRfind$ is defined analogously. Finally, $\SEfind$ is the search problem of finding an action profile in which a fraction at least $\alpha$ of players play a top $\beta$ action.
\end{definition}

\begin{definition}[Counting problems: $\PNEcount$, $\NWRcount$, and $\SEcount$]\ \\
The problem of counting the number of pure Nash equilibria is the relation $\PNEcount \subseteq \Games \times \{0,1,2,\dots\}$ where $(g,k) \in \PNEcount$ if and only if $g \in \Games$ admits exactly $k$ pure Nash equilibria.
Similarly, $\NWRcount \subseteq \Games \times \{0,1,2,\dots\}$ denotes the problem of counting the number of no-worst-response action profiles, and $\SEcount \subseteq \Games \times \{0,1,2,\dots\}$ denotes the problem of counting the number of action profiles in which a fraction at least $\alpha$ of players play a top $\beta$ action.
\end{definition}

\begin{definition}[Parametrised problems]
   For each of our computational problems on games defined above, $f$, and each positive integer $m$, we define the parametrised problem $m$-$f$ to be the problem $f$ within the class of games with $m$ actions per player. For example, $m$-$\NWRfind$ denotes the problem of finding a no-worst-response action profile in games with $m$ actions per player. 
\end{definition}

\subsection{Computational complexity}\label{sec:compcomplexity}
We say that a computational problem is \emph{solvable} if there is an algorithm that computes its relation. For example, we say that $\PNEexists$ is solvable if there is an algorithm that, given any $g\in \Games$, correctly determines whether $g$ has a pure Nash equilibrium.

For any computational problems $f \subseteq \In \times \Out$ and $f' \subseteq \In' \times \Out'$, a \emph{polynomial-time reduction  from $f$ to $f'$} is a pair $(\push, \pull)$ of polynomial-time computable functions $\push: \In \to \In'$ and $\pull:\In \times \Out' \to \Out$ such that for all $x \in \In$ and $y' \in \Out'$,
\[
 (\push(x),y') \in f' \text{ implies } (x,\pull(x,y')) \in f,
\]
and the existence of some $y$ such that $(x,y) \in f$ implies that there is some $y'$ for which $(\push(x),y') \in f'$; i.e., solvable instances of $f$ are reduced to solvable instances of $f'$.
In cases where the $\pull$ function is independent of the original problem instance, we simply take it that $\pull: \Out' \to \Out$.

The reduction turns any instance of problem $f$ into an instance of problem $f'$ and the  answer of problem $f'$ is turned into an answer of problem $f$. This means that if one can solve $f'$, then one can use the reduction $(\push,\pull)$ to also solve $f$; hence (provided $\push,\pull$ are easy to compute) $f'$ is at least as ``hard'' as $f$. See the sketch in \Cref{fig:flow-diagram}.

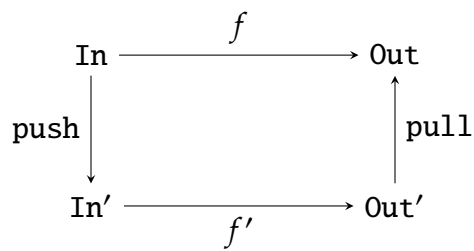
\begin{figure}[h]
    \centering
    \begin{tikzpicture}[>=stealth, node distance=3cm]

        \node (A) at (0,2) {$\In$};
        \node (X) at (4,2) {$\Out$};
        \node (B) at (0,0) {$\In'$};
        \node (Y) at (4,0) {$\Out'$};

        \draw[->] (A) -- node[above, midway] {$f$} (X);
        \draw[->] (A) -- node[left] {$\push$} (B);
        \draw[->] (B) -- node[below, midway] {$f'$} (Y);
        \draw[->] (Y) -- node[right] {$\pull$} (X);

    \end{tikzpicture}
    \caption{Reduction from $f$ to $f'$.}
    \label{fig:flow-diagram}
\end{figure}

\subsubsection{Complexity of decision problems}
The class $\NP$ is the class of all decision problems whose solutions, when guessed, can be verified in polynomial time.
As an example, we can show that $\PNEexists \in \NP$; consider any $g \in \Games$, and guess an arbitrary action profile $a$. We can check whether any unilateral deviations are utility improving for any player in polynomial time. If not, return true; otherwise, return false.

The class $\Pclass$ is the class of all decision problems that are solvable by a deterministic polynomial-time algorithm. The $\Pclass \neq \NP$ conjecture asserts that the class of problems in $\Pclass$ is strictly smaller than the class $\NP$, that is, there exist problems whose solutions can be efficiently verified but not efficiently found. 

A problem $f'$ is $\NPhard$ if, for every problem $f \in \NP$, there is a polynomial-time reduction  from $f$ to $f'$. A decision problem $f'$ is $\NPcomplete$ if $f' \in \NP$ and is $\NPhard$. Moreover, polynomial-time reduction s are transitive. To show that a problem $f'$ is $\NPcomplete$, it therefore suffices to show that the problem is in $\NP$ and that there is a polynomial-time reduction  from an $\NPcomplete$ problem $f$ to $f'$. The decision problem $\circuitSATexists$, deciding whether a given circuit $C$ is satisfiable, is $\NPcomplete$ \citep*{karp1972reducibility}.

Finally, the class $\RPclass$ is the class of all decision problems for which there is a polynomial-time randomized algorithm (a false-biased Monte Carlo algorithm) that returns $0$ on any $0$ instance, and that falsely returns $0$ on a $1$ instance with probability at most $\frac{1}{2}$.

\subsubsection{Complexity of search problems}\label{sec:complexity-search}

We will also be interested in the complexity class $\PLS$ (Polynomial Local Search). A problem belongs to $\PLS$ if it can be formulated as searching for a local optimum in a landscape where: 
\begin{enumerate}[leftmargin=*, labelindent=0pt,label=(\roman*)]
    \item \textit{Solutions have polynomial length.} Every feasible solution can be encoded in a string whose length is polynomial in the instance size.
    \item \textit{Feasible solutions are easy to generate.} There is a polynomial-time algorithm that, given an instance, produces some valid starting solution.
    \item \textit{There is an objective function.} Each feasible solution $s$ is assigned a value $v(s) \in \mathbb{Z}$.
    \item \textit{Solutions are easy to verify and evaluate.} Given a candidate solution, one can check in polynomial time whether it is feasible and compute $v(s)$.
    \item \textit{There is a neighbourhood structure.} There is a relation $\Gamma$ on feasible solutions; we say $s'$ is a \emph{neighbour} of $s$ if $(s, s') \in \Gamma$. A solution $s$ is a \emph{local optimum} if $v(s) \geq v(s')$ for all neighbours $s'$ of $s$.
    \item \textit{Neighbours are easy to enumerate and compare.} For any solution, there is a polynomial-time algorithm that either certifies the solution is a local optimum (no neighbour improves the objective) or returns a strictly better neighbouring solution.
\end{enumerate}

Because any improving step increases the objective and the objective is bounded, the process must terminate at a local optimum (thus implying the existence of local optima). However, the number of improving steps may be exponential, so membership in $\PLS$ does not guarantee an efficient algorithm; only that the \emph{structure} of local search applies.

A problem is $\PLShard$ if every problem in $\PLS$ can be reduced to it in polynomial time. A $\PLS$-reduction from problem $f$ to problem $f'$ consists of a polynomial-time reduction  $(\push, \pull)$ with additional structure. The $\push$ function maps instances of $f$ to instances of $f'$, and the second function $\pull$ maps an instance $I$ of $f$ and a solution to an instance of $f'$ back to a solution of the original instance, such that: for any instance $I$ of $f$, and any local optimum $s^*$ of $\push(I)$, the solution $\pull(I, s^*)$ is a locally optimal solution of $I$. A problem $f'$ is $\PLScomplete$ if $f' \in \PLS$ and is $\PLShard$. Moreover, $\PLS$-reductions are transitive. To show that a problem $f'$ is $\PLScomplete$, it therefore suffices to show that the problem is in $\PLS$ and that there is a reduction from a $\PLScomplete$ problem $f$ to $f'$. It is known that finding a pure Nash equilibrium in (succinctly represented) potential games is $\PLScomplete$ \citep*{FabrikantPT04}. 

Establishing $\PLS$-completeness for a problem is strong evidence that no polynomial-time algorithm finds a local optimum in general, because such an algorithm would solve all local-search problems efficiently. Canonical $\PLScomplete$ problems include finding a pure Nash equilibrium in potential games \citep*{FabrikantPT04}. In this setting, when players myopically best-respond, convergence of a local search is guaranteed but may take an exponential number of steps in the number of players.

\subsubsection{Complexity of counting problems}
The complexity class $\Pcount$ is the analogue of $\NP$ for counting problems: it consists of all counting problems whose solutions, when guessed, can be verified in polynomial time.

A counting problem $f'$ is $\Pcounthard$ if for every problem $f \in \Pcount$, there is a polynomial-time computable reduction from $f$ to $f'$ that preserves counts. A problem $f'$ is $\Pcountcomplete$ if $f' \in \Pcount$ and is $\Pcounthard$. Moreover, $\Pcount$-reductions are transitive. To show that a problem $f'$ is $\Pcountcomplete$, it therefore suffices to show that the problem is in $\Pcount$ and that there is a reduction from a $\Pcountcomplete$ problem $f$ to $f'$. The counting version of $\circuitSATexists$ is the relation $\circuitSATcount \subseteq \Sigma \times \{0,1,2,\dots\}$ that maps from the set of all Boolean circuits $\Sigma$ to the non-negative integers, where $(C,k) \in \circuitSATcount$ if and only if $C$ admits $k$ satisfying truth assignments. The $\circuitSATcount$ problem is known to be $\Pcountcomplete$.

\section{Decision and counting problems in general games}

We begin our investigation of the intractability of not worst-responding by studying the problem of determining whether a given circuit game has a no-worst-response action profile. Here, and throughout, we show hardness for the computational problems in which the number of actions for all players are fixed to some integer $m \geq 2$. This also implies hardness for the non-parametric version of the problems, in which the number of actions is part of the input.

\begin{theorem}\label{thm:main}
For every integer $m \geq 2$, the decision problem $m$-$\NWRexists$ of determining the existence of a no-worst-response action profile, restricted to games with $m$ actions per player, is $\NPcomplete$. 
\end{theorem}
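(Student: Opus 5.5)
We show membership in $\NP$ directly and prove hardness by a polynomial-time reduction from $\circuitSATexists$. Satisfying assignments become action profiles of ``variable'' players, and a gadget of extra players admits no no-worst-response action profile exactly when the encoded assignment fails to satisfy the circuit.

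\emph{Membership.} The certificate is an action profile $a$. Since $m$ is a fixed constant, for each player $i$ we evaluate the circuit $C_i$ on the $m$ profiles $(a_i',a_{-i})$, $a_i' \in A_i$. We then compare the resulting rationals, each a quotient of polynomial-length integers, and check that $u_i(a_i,a_{-i})$ is not strictly smaller than all the other values. This takes polynomial time, so $m$-$\NWRexists \in \NP$. (Even if $m$ were part of the input in binary this could fail, but the theorem fixes $m$.)

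\emph{Hardness.} Given a circuit $C$ with inputs $x_1,\dots,x_n$, we build a game with $n+m$ players, each having actions $\{0,\dots,m-1\}$.

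The variable players $1,\dots,n$ receive a constant payoff of $0$. By the strict definition of worst response, all their actions then tie, so none of them is ever worst-responding. An action profile $a$ encodes the assignment $\tau(a)$ with $x_i = 1$ iff $a_i = 1$; actions $\ge 2$ are read as $0$, and encoding $\tau(a)$ is a trivial circuit.

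The gadget players are $z_0,\dots,z_{m-1}$, with actions $b_0,\dots,b_{m-1}$ and $s := \sum_{l} b_l \bmod m$. If $C(\tau(a)) = 1$, every $z_j$ receives payoff $0$. If $C(\tau(a)) = 0$, then $z_j$ receives $0$ when $s \equiv j \pmod m$ and $1$ otherwise. Fixing the others' actions $b_{-j}$, exactly one value of $b_j$ makes $s \equiv j$. Hence, in the unsatisfied case, $z_j$ has a unique worst response, namely the action making $s \equiv j$. Each payoff circuit consists of a copy of $C$, a mod-$m$ adder on $O(m\log m)$ input bits and a comparator, so the whole game has size polynomial in $|C|$ for fixed $m$.

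\emph{Correctness.} Suppose $C(\tau) = 1$ for some $\tau$. Let the variable players encode $\tau$ and let the gadget players act arbitrarily. All payoffs are then constant, so nobody worst-responds and the profile is no-worst-response.

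Conversely, suppose $C$ is unsatisfiable, and take any profile $a$. Let $r \equiv s \pmod m$ be the residue of the gadget players' action sum at $a$. Then player $z_r$ plays the unique action producing residue $r$, which is its unique worst response. So no profile is no-worst-response. The reduction is therefore correct; the $\pull$ map is not needed for the decision problem.

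The main obstacle is designing this gadget. Each player's worst-response condition only forbids one action per opponent profile. A counting argument then shows that two players cannot cover all $m^2$ profiles once $m \ge 3$: they forbid only $2m$ profiles, whereas $k$ players forbid at most $k\,m^{k-1}$ of the $m^k$ profiles, so at least $m$ players are needed. The sum-mod-$m$ construction, a generalized matching pennies game, meets this bound exactly. For $m=2$ it reduces to two players, and the argument is consistent with the known $\PNEexists$ hardness.
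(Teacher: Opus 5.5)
Your reduction is correct, but it is organised differently from the paper's. The paper (\Cref{def:circuitSAT-reduction}) uses $nm$ players in $n$ groups of size $m$. Each group both encodes one variable, by unanimously playing the same action in $\{0,1\}$, and carries its own sum-mod-$m$ cyclic local term. A global bonus $(m+1)f(a)$ dominates exactly at unanimous encodings of satisfying assignments. Satisfying profiles are therefore strict pure Nash equilibria, and every other profile has a worst-responder in \emph{every} group. You instead decouple encoding from punishment: $n$ indifferent dummy players encode $\tau$ freely, and a single $m$-player generalised matching-pennies gadget is switched on precisely when $C(\tau)=0$. This gives a smaller game ($n+m$ players) and needs no coordination within groups. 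Your covering bound $k\,m^{k-1}\ge m^k$ also shows that $m$ is the minimum size of such a gadget, which is exactly the paper's group size. The price is that your construction is less reusable. The paper's reduction is parsimonious, which yields \Cref{cor:NumberHashP} immediately. In every unsatisfying profile of the paper's game, at least a $1/m$ fraction of players worst-respond, and this is what drives \Cref{prop:finer-hardness} and \Cref{cor:last}. Your reduction gives neither directly. A satisfying $\tau$ with $n_0$ zeros corresponds to $m^m(m-1)^{n_0}$ no-worst-response profiles, since actions $\ge 2$ are read as $0$. And in an unsatisfying profile exactly one of the $n+m$ players worst-responds. Your dummies also rely on the strict convention that ties are not worst responses; the paper's witnesses are strict equilibria, so its reduction works under either tie convention. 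Finally, your parenthetical is overly cautious. Membership in $\NP$ survives even if $m$ is given in binary, because the certificate can include, for each player $i$, one alternative action $a_i'$ with $u_i(a_i',a_{-i}) \le u_i(a)$.
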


It is easy to check that $\NWRexists \in \NP$, as we can guess a no-worst-response action profile and verify in polynomial time that no player is worst-responding. To show that $\NWRexists$ is $\NPhard$, we define a reduction from $\circuitSATexists$ and then show that it is polynomial. 
Note that our proof implies that no-worst-response action profiles are not guaranteed to exist \citep[this was previously pointed out by][]{pradelski2024satisficing}.

\begin{reduction}[Reduction from $\circuitSATexists$ to {$\NWRexists$}]
\label{def:circuitSAT-reduction}
Let $C$ be a Boolean circuit with inputs $x_1, \ldots, x_n$, and fix some $m \geq 2$. We construct a game $g$ from $C$ with $n m$ players $(i,\ell)$, where $i \in \{1, \ldots, n\}$ and $\ell \in \{1, \ldots,m\}$. Each player has actions $A_{i\ell} = \{0, \ldots, m-1\}$. 

To construct the payoff function of each player, we first define a function $f : A \to \{0,1\}$ as follows: let $f(a) = 1$ if, for each $i \in \{1,\ldots,n\}$, the players $(i,1), \ldots, (i, m)$ play the same action $a_{i\ell} \in \{0,1\}$ and $C(\tau) = 1$ for $\tau$ defined by $\tau_i = a_{i\ell}$. Otherwise, let $f(a) = 0$.

This allows us to define payoffs as
\begin{equation}
    u_{i\ell}(a) =
    \underbrace{(m+1) f(a)}_{\text{global term}} +
    \underbrace{ \left [ \left (\ell + \sum_{\ell'=1}^{m} a_{i\ell'} \right ) \mkern-15mu \mod m \right ]}_{\text{local term}} ,
    \label{eq:payoffs}
\end{equation}
which completes the description of the reduction. \hfill$\triangle$
\end{reduction}

Intuitively, the global term is a potential which reflects whether the actions map to a satisfying truth assignment of the circuit $C$. Meanwhile, for each group $(i, \ell)_{\ell \in \{1, \ldots, m\}}$, the local term is the utility function of a harmonic game \citep*{candogan2011flows}.
This leads to the following straightforward observation.
\begin{observation}
\label{obs:circuitSAT-reduction}
Each player's payoff from action profile $a$ is at least $m+1$ if $f(a) = 1$ and at most $m$ if $f(a) = 0$. This implies the following for any action profile $a$.
\begin{enumerate}[leftmargin=*, labelindent=0pt,label=(\roman*)]
    \item If $f(a)=1$, then each player gets a payoff of at least $m+1$ from the global term. By unilaterally deviating to, say, action $a'_{i\ell}$, the player forces $f(a'_{i\ell}, a_{-i\ell}) = 0$, so her payoff strictly reduces.
    \item If $f(a)=0$, then for each group $(i, \ell)_{\ell \in \{1, \ldots, m\}}$, one player has a payoff of $0$, as the global term in \eqref{eq:payoffs} is $0$ and the local term as a function of $\ell$ is a bijection from $\{1,\ldots,m\}$ to $\{0,\ldots,m-1\}$. Any such player with payoff $0$ can profitably deviate by playing any other action in $A_{i\ell} \setminus \{a_{i\ell}\}$, so she is strictly worst-responding at $a$.
\end{enumerate}
\end{observation}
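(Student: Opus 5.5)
The plan is to verify the two payoff bounds directly from \eqref{eq:payoffs} and then check (i) and (ii) separately. The only inputs are the definition of $f$ and the range of the local term. I expect no step to be a real obstacle. The one point needing care is (ii): I must show that \emph{every} unilateral deviation strictly helps the zero-payoff player, and not merely that some deviation does.

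\emph{Payoff bounds.} The local term $\left(\ell+\sum_{\ell'}a_{i\ell'}\right)\bmod m$ always lies in $\{0,\ldots,m-1\}$, and $f(a)\in\{0,1\}$.
\begin{itemize}
\item If $f(a)=1$, then $u_{i\ell}(a)\ge m+1$.
\item If $f(a)=0$, then $u_{i\ell}(a)\le m-1\le m$.
\end{itemize}

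\emph{Part (i).} Suppose $f(a)=1$. Then every group $(i,1),\ldots,(i,m)$ plays a common action in $\{0,1\}$, and the induced $\tau$ satisfies $C$. Let player $(i,\ell)$ deviate to some $a'_{i\ell}\neq a_{i\ell}$. Since $m\ge 2$, at least one other member of group $i$ still plays $a_{i\ell}$, so the group no longer plays a common action. Hence $f(a'_{i\ell},a_{-i\ell})=0$, regardless of whether $a'_{i\ell}\in\{0,1\}$. By the bounds, her payoff falls from at least $m+1$ to at most $m-1$, a strict decrease.

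\emph{Part (ii).} Suppose $f(a)=0$, fix a group $i$, and write $S_i=\sum_{\ell'}a_{i\ell'}$. The map $\ell\mapsto(\ell+S_i)\bmod m$ is a translate of the identity modulo $m$. It is therefore a bijection from $\{1,\ldots,m\}$ onto $\{0,\ldots,m-1\}$, so exactly one player $(i,\ell^*)$ has local term $0$. Since the global term vanishes, her payoff is $0$.

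Now let her deviate to any $a'\neq a_{i\ell^*}$. The group sum changes by $a'-a_{i\ell^*}$. Both values lie in $\{0,\ldots,m-1\}$ and are distinct, so this change is nonzero modulo $m$. Her new local term is thus a nonzero residue, hence at least $1$. The global term is nonnegative, even if the deviation happens to make $f=1$. So her new payoff is at least $1>0$.

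This holds for every one of her $m-1$ alternative actions, so $u_{i\ell^*}(a)<u_{i\ell^*}(a',a_{-i\ell^*})$ for all $a'\neq a_{i\ell^*}$. That is exactly the definition of a strict worst response.
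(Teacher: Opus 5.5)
Your proof is correct and follows the same reasoning the paper gives inline in the observation: the $(m+1)f(a)$ offset against a local term in $\{0,\ldots,m-1\}$, broken group unanimity for (i), and the bijection $\ell \mapsto (\ell + S_i) \bmod m$ for (ii). You also make explicit two points the paper leaves implicit. First, $m \geq 2$ ensures some other group member still plays the old common action. Second, any deviation by the zero-payoff player shifts the group sum by a nonzero residue modulo $m$, so \emph{every} alternative action gives her a strictly positive payoff, including one that switches $f$ to $1$.
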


\begin{proof}[Proof of \Cref{thm:main}]
Let $C$ be a Boolean circuit with inputs $(x_1, \ldots, x_n)$, and let $g$ be the game as constructed above. Suppose circuit $C$ is unsatisfiable. By construction of $f$, we have $f(a) = 0$ for any action profile, and so \cref{obs:circuitSAT-reduction} tells us that every action profile has some players who worst-respond.

Now suppose that $C$ is satisfiable, and let $\tau^*$ be a satisfying truth assignment for $C$. Define the action profile $a$ given by $a_{i\ell} = \tau^*_i$ for each $i \in \{1,\ldots,n\}$ and $\ell \in \{1,\ldots,m\}$. It is immediate by construction that $f(a) = 1$, so by \cref{obs:circuitSAT-reduction} we see that each player gets a payoff of at least $m+1$. Moreover, unilaterally deviating strictly reduces a player's payoff, so $a$ is a pure Nash equilibrium (and so also a no-worst-response action profile).

In summary, $g \in \NWRexists$ if and only if $C \in \circuitSATexists$. Finally, it is straightforward to construct Boolean circuits to compute each $u_{i\ell}$ in polynomial time, so we arrive at a polynomial-time reduction.
\end{proof}

Since $\NWRfind$ cannot be easier than $\NWRexists$, we immediately obtain the following corollary.

\begin{corollary}\label{cor:findNP}
For every integer $m \geq 2$, the search problem $m$-$\NWRfind$ of finding a no-worst-response action profile, restricted to games with $m$ actions per player, is $\NPhard$.
\end{corollary}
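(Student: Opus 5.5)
The plan is to reduce the decision problem $m$-$\NWRexists$ to the search problem $m$-$\NWRfind$. Since \Cref{thm:main} makes the decision problem $\NPhard$, and polynomial-time reductions compose, the search problem is then $\NPhard$ as well.

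The reduction uses the identity map $\push(g) = g$. The map $\pull$ returns $1$ whenever the search oracle outputs an action profile and $0$ whenever it returns ``infeasible''. Both maps are clearly polynomial-time computable.

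Correctness follows from the definition of what it means for an algorithm to solve $\NWRfind$. If $g$ has a no-worst-response profile, the solver must output some profile $a$ with $(g,a)\in\NWRfind$, so $\pull$ correctly answers $1$. If $g$ has no such profile, the solver must return ``infeasible'', so $\pull$ correctly answers $0$. The solvability condition in the definition of a reduction also holds: every instance of $\NWRexists$ has an answer, and the corresponding instance of $\NWRfind$ always has a valid output, namely a profile or ``infeasible''.

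The parameter $m$ is unchanged by the identity map, so the reduction stays inside the class of games with $m$ actions per player, for every $m\ge 2$.

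Alternatively, one can avoid the ``infeasible'' convention by composing directly with \Cref{def:circuitSAT-reduction}. In the proof of \Cref{thm:main}, any no-worst-response profile $a$ of the constructed game $g$ must have $f(a)=1$, by \cref{obs:circuitSAT-reduction}(ii). Hence $\tau_i = a_{i1}$ is a satisfying assignment of $C$. So a search oracle for $\NWRfind$ either yields a satisfying assignment, which can be checked in polynomial time by evaluating $C$, or correctly certifies that none exists.

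I do not expect any real obstacle here. The only point needing care is matching the formal definition of reduction: the output space of the search problem must include the ``infeasible'' symbol so that $\pull$ is well defined on it.
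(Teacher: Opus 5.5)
Your proposal is correct and is essentially the paper's argument: the paper derives \Cref{cor:findNP} in one line from \Cref{thm:main} by noting that $\NWRfind$ cannot be easier than $\NWRexists$, and your identity-$\push$ reduction, with a $\pull$ that sends ``infeasible'' to $0$ and any returned profile to $1$, is the formal version of that remark. Your alternative route through \Cref{def:circuitSAT-reduction} is also valid: you extract $\tau_i = a_{i1}$ from any no-worst-response profile and verify it on $C$, which has the small advantage of not relying on the ``infeasible'' convention.
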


The game $g$ constructed from circuit $C$ has $k$ no-worst-response action profiles (in particular, $k$ pure Nash equilibria) exactly when there are $k$ satisfying assignments for $C$. Moreover, it is easy to check that $\NWRcount \in \Pcount$. And, since $\circuitSATcount$ is $\Pcountcomplete$, we see the same for $\NWRcount$.

\begin{corollary}\label{cor:NumberHashP}
For every integer $m \geq 2$, the counting problem $m$-$\NWRcount$ of determining the number of no-worst-response action profiles, restricted to games with $m$ actions per player, is $\Pcountcomplete$.
\end{corollary}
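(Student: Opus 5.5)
The plan has two parts: show $m$-$\NWRcount \in \Pcount$, and reuse \Cref{def:circuitSAT-reduction} as a parsimonious reduction from $\circuitSATcount$, which is $\Pcountcomplete$.

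For membership, the witnesses are action profiles $a$, encoded as $nm$ strings of length $O(\log m)$ each, so polynomial in the input. Given the game $g$ and a candidate $a$, we verify that $a$ is a no-worst-response profile in polynomial time. For each player $i$ we evaluate the circuit $C_i$ on $(a_i', a_{-i})$ for every $a_i' \in A_i$. That is $m$ evaluations per player. We then check that some $a_i' \neq a_i$ has $u_i(a_i',a_{-i}) \le u_i(a_i,a_{-i})$. Comparing the rational payoffs means comparing two quotients of binary integers, which is done by cross-multiplication. The number of profiles passing this check is the answer, so $m$-$\NWRcount \in \Pcount$.

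For hardness, fix $m \ge 2$. Map a circuit $C$ to the game $g$ of \Cref{def:circuitSAT-reduction}, and let $\pull$ be the identity on counts. I need to show that the number of no-worst-response profiles of $g$ equals the number of satisfying assignments of $C$. I will establish a bijection $\tau \mapsto a^\tau$ with $a^\tau_{i\ell} = \tau_i$.
\begin{enumerate}[leftmargin=*, labelindent=0pt,label=(\roman*)]
\item If $\tau$ satisfies $C$, then $f(a^\tau)=1$. By \cref{obs:circuitSAT-reduction}(i), $a^\tau$ is a pure Nash equilibrium in which every deviation is strictly worse. With $m\ge 2$ the player's current action is then not a worst response, so $a^\tau$ is a no-worst-response profile.
\item Conversely, let $a$ be any no-worst-response profile. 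By \cref{obs:circuitSAT-reduction}(ii) we must have $f(a)=1$, since otherwise some player strictly worst-responds. By the definition of $f$, each group $(i,\cdot)$ then plays a common action in $\{0,1\}$, and the induced $\tau$ satisfies $C$. Hence $a = a^\tau$.
\end{enumerate}
The map $\tau\mapsto a^\tau$ is injective, so it is a bijection between satisfying assignments and no-worst-response profiles. The reduction is thus parsimonious.

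The only step needing care is that the construction is polynomial for each fixed $m$. The argument is the same as in the proof of \Cref{thm:main}. Each $u_{i\ell}$ combines a copy of $C$ with equality checks inside each group, an adder for the $m$ group actions, and a mod-$m$ reduction; all of these have circuits of size polynomial in $|C|$ and $m$. I expect no real obstacle: the main point is confirming that \cref{obs:circuitSAT-reduction} rules out every non-canonical profile. That includes profiles where a group plays some action outside $\{0,1\}$ when $m>2$; these have $f=0$ by definition and are excluded too. Combining membership with the parsimonious reduction from the $\Pcountcomplete$ problem $\circuitSATcount$ gives the corollary.
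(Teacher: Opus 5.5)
Your proposal is correct and follows the paper's own argument: membership comes from guessing a profile and checking each player's $m$ alternatives, and hardness comes from the fact that in the game of \Cref{def:circuitSAT-reduction} the no-worst-response profiles are exactly those with $f(a)=1$, which correspond one-to-one to satisfying assignments of $C$. The only slip is cosmetic: in the membership part the input is an arbitrary $n$-player game, so a witness consists of $n$ action strings, not $nm$.
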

\begin{proof}[Proof of \Cref{cor:NumberHashP}]
In the reduction from circuit $C$ to game $g$, we see that all action profiles $a$ for $g$ have a worst-responding player if $f(a) = 0$. This is the case in particular if $C$ is not satisfiable. Moreover, when $C$ is satisfiable, $a$ is a no-worst-response action profile if and only if $f(a) = 1$. It is straightforward that these correspond one-to-one to satisfying truth assignments for $C$, so $\NWRcount$ is $\Pcounthard$.
\end{proof}

\section{Complexity of not-worst-responding in potential games}

We now turn our attention to potential games.

\begin{definition}\label{def:potential-game}
A game $(N, \{A_i\}_{i \in N}, \{u_i\}_{i \in N}) \in \Games$ is a \emph{ potential game} if there exists a function $\phi:A \to \mathbb{R}$ such that for each $a \in A$, each $i \in N$, and each $a_i' \in A_i$, we have $u_i(a_i',a_{-i}) - u_i(a) = \phi(a_i',a_{-i}) - \phi(a)$.    
\end{definition}

It is well-known that potential games are guaranteed to have a pure Nash equilibrium \citep*{rosenthal1973class,monderer1996potential}. Moreover, the search problem of finding such a Nash equilibrium is known to be $\PLScomplete$ \citep*{FabrikantPT04}. Our main result in this section will be that finding no-worst-response action profiles in potential games has the same complexity. To this end, we first remark that finding a Nash equilibrium in potential games is $\PLScomplete$ even if each player has two actions. This allows us to reduce from this restricted problem to $\NWRfind$.

\begin{fact}[following from \citealt*{pardalos1992complexity}]
The search problem $2$-$\PNEfind$ of finding a pure Nash equilibrium, restricted to potential games with two actions per player, is $\PLScomplete$.
\end{fact}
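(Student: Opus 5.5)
The plan is to prove the Fact in two halves: membership in $\PLS$, and $\PLS$-hardness by a direct embedding of a known $\PLScomplete$ local-search problem over $\{0,1\}^n$ with the single-flip neighbourhood.

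\textbf{Membership.} For membership I would follow the standard argument of \citet*{FabrikantPT04}. Take feasible solutions to be action profiles $a\in\{0,1\}^n$, with the all-zero profile as the starting solution. The objective is the potential $\phi$, and two profiles are neighbours when they differ in exactly one player's action. By \cref{def:potential-game}, a unilateral deviation strictly improves the deviating player's payoff exactly when it strictly increases $\phi$. Hence the local optima of $\phi$ under this neighbourhood are precisely the pure Nash equilibria. Each player has only two actions, so improving neighbours can be enumerated in polynomial time by evaluating the $2n$ relevant circuit outputs. The one subtlety is that $\phi$ itself must be efficiently computable, and here I would simply rely on the convention of \citet*{FabrikantPT04}.

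\textbf{Hardness.} I would reduce from the problem shown $\PLScomplete$ by \citet*{pardalos1992complexity}: finding a flip-local optimum of an unconstrained quadratic $0$-$1$ program
\[
\max_{x\in\{0,1\}^n} Q(x)=\sum_{i\le j} q_{ij}x_ix_j,
\]
with integer coefficients $q_{ij}$. Equivalently, one could reduce from local Max-Cut under the flip neighbourhood \citep*{schaffer1991simple}.

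The $\push$ map builds an identical-interest game with one player per variable, actions $A_i=\{0,1\}$, and $u_i(a)=Q(a)$ for every $i$. This is a potential game with $\phi=Q$. Its payoff circuits are of polynomial size, since they consist of $O(n^2)$ products and an adder tree on integers of polynomial bit-length. The $\pull$ map is the identity: a profile $a$ is sent to the assignment $x=a$.

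A profile $a$ is a pure Nash equilibrium if and only if no single flip $x_i\mapsto 1-x_i$ increases $Q$, that is, if and only if $a$ is a flip-local optimum. This gives a $\PLS$-reduction, and transitivity of $\PLS$-reductions finishes the proof.

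\textbf{Main obstacle.} I expect the main care to go into matching the neighbourhood and the encoding exactly, rather than into any difficult step. The local-optimality notion in \citet*{pardalos1992complexity} must coincide with unilateral deviation; with binary actions it does, because a player's only possible deviation is the single flip. The payoffs must also be written in the circuit-game format of \citet*{schoenebeck2012computational}, with integer payoffs given as a quotient with denominator $1$. If the Pardalos et al.\ result is stated for minimisation or for $\{-1,1\}$ variables, I would first negate $Q$ or apply the affine change of variables $y=2x-1$. Both transformations preserve flip-local optimality.
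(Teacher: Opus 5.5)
Your proposal is correct and follows essentially the same route as the paper. Membership uses the potential as the local-search objective. Hardness encodes the local binary quadratic program of \citet*{pardalos1992complexity} as an identical-interest two-action game in which every payoff equals the quadratic objective, so pure Nash equilibria coincide with flip-local optima and $\pull$ is the identity. The one point where you defer to convention, efficient computability of $\phi$, is settled directly in the paper: the telescoping argument in the proof of \cref{lem:PLSmembership} recovers $\phi$ from the payoff circuits up to an additive constant.
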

For completeness, we discuss how this follows. First, it is immediate that the problem lies in $\PLS$. To show hardness, we reduce from the local binary quadratic program problem. Recall that a binary quadratic program is specified by an $n \times n$ matrix $Q$ and a length $n$ vector $q$, both integral. The problem asks to find an optimal solution of
\begin{equation}\label{opt:BQP}
    \max_{x \in \{0,1\}^n} q^\top x + \frac{1}{2} x^\top Qx .
\end{equation}
The local variant, in turn, asks to find a $0$-$1$ vector $x$ such that no change in a single coordinate improves the value of the solution. \citet*{pardalos1992complexity} show that this search problem is $\PLScomplete$. This local optimisation variant admits an obvious representation as a potential game with $n$ players where each player $i \in N = \{1,2,...,n\}$ has action set $A_i = \{0,1\}$, and payoff functions are given by
\[
u_i(a) = \sum_{j = 1}^n \left[q_ja_j + \sum_{k =1}^n\frac{1}{2} Q_{jk}a_ja_k \right].
\]
Then the set of pure Nash equilibria of the game is in bijection with the set of local optima of \eqref{opt:BQP}.

We are now ready to develop our main result in this section.

\begin{theorem}\label{thm:potential-complexity}
For every integer $m \geq 2$, the search problem $m$-$\NWRfind$ of finding a no-worst-response action profile, restricted to potential games with $m$ actions per player, is $\PLScomplete$.
\end{theorem}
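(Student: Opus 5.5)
The plan is to give a $\PLS$-reduction from $2$-$\PNEfind$ in potential games (the Fact above) to $m$-$\NWRfind$ in potential games, and to check membership separately.

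\emph{Membership.} Given a potential game with $m$ actions per player, run local search on the potential $\phi$ with neighbours being unilateral deviations. Equivalently, use $u_i(a_i',a_{-i})-u_i(a)=\phi(a_i',a_{-i})-\phi(a)$ to pick an improving deviation in polynomial time, since $m$ is fixed. A local optimum is a pure Nash equilibrium, and a pure Nash equilibrium is a no-worst-response profile when $m\ge 2$. Hence the problem is in $\PLS$. (Alternatively, one can reduce it directly to $\PNEfind$ in potential games.)

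\emph{Hardness.} Let $(\{1,\dots,n\},\{0,1\}^n,\{u_i\})$ be a two-action potential game with integral potential $\phi$; this is the BQP form, so one can take $\phi = q^\top x+\tfrac12 x^\top Qx$ after doubling. First I would fix, depending only on $m$, a constant $k$ and a \emph{gadget} $h:\{0,\dots,m-1\}^k\to\{0,1\}$ with the following property.
\begin{quote}
$(\ast)$ For every $y\in\{0,\dots,m-1\}^k$ there is a coordinate $j$ such that $h(y_j',y_{-j})\neq h(y)$ for all $y_j'\neq y_j$.
\end{quote}
In words, some coordinate's entire line through $y$, apart from $y$ itself, has the opposite colour. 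For $m=2$, $k=1$ and $h(y)=y$ works. For general $m$, choose $h$ uniformly at random. For a fixed $y$ and coordinate $j$, the bad event is that the other $m-1$ points on that line are not all of the opposite colour. These events are independent across $j$ given $h(y)$, so the event $B_y$ that $(\ast)$ fails at $y$ has probability $(1-2^{-(m-1)})^k$. Moreover, $B_y$ depends only on the colours of the points at Hamming distance at most $1$ from $y$. Hence $B_y$ is mutually independent of all $B_{y'}$ with $d(y,y')>2$, which gives dependency degree $D=O(k^2m^2)$. Taking $k = C\,m2^{m}\log m$ makes $e\,(1-2^{-(m-1)})^k (D+1)<1$, so the Lovász Local Lemma yields an $h$ satisfying $(\ast)$. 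Since $k$ and $m$ are constants, $h$ is a fixed finite table and is trivially computable by a constant-size circuit.

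\emph{Construction.} Replace each original player $i$ by a group of $k$ players $(i,1),\dots,(i,k)$, each with actions $\{0,\dots,m-1\}$. Write $x_i(a)=h(a_{i1},\dots,a_{ik})$. Define the potential
\[
\Phi(a)= M\,\phi\big(x(a)\big)+\sum_{i,j} a_{ij},
\]
with $M=km$, and set every player's payoff equal to $\Phi$. Alternatively, set $u_{ij}(a)=M\,\phi(x(a))+a_{ij}$, which has potential $\Phi$ because the tie-breaking term is separable. The circuits computing these payoffs are clearly polynomial-size. The pull map decodes $x(a)$ from an output profile $a$.

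\emph{Correctness, which I expect to be the crux, together with the existence of the gadget.} Suppose $a$ is a no-worst-response profile but $x=x(a)$ is not a local optimum. Then some $i$ has $\phi(1-x_i,x_{-i})\ge\phi(x)+1$ by integrality. Apply $(\ast)$ to the group $(a_{i1},\dots,a_{ik})$ to obtain a coordinate $j$. Every deviation of player $(i,j)$ flips $x_i$, so it raises $M\phi$ by at least $M$. At the same time it changes the tie-breaking term by at most $m-1<M$. Hence every alternative action gives $(i,j)$ a strictly higher payoff, so $(i,j)$ is worst-responding, which is a contradiction. Therefore $x(a)$ is a pure Nash equilibrium of the original game, and this is a valid $\PLS$-reduction. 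The delicate point is the choice of tie-breaker: it must keep the game a potential game, so the harmonic local term of Reduction~\ref{def:circuitSAT-reduction} is not usable here. This is why the argument relies on the combinatorial property $(\ast)$ rather than on local payoff structure.
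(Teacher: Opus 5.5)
Your proposal is correct and follows essentially the paper's route. For membership you use local search on (a variant of) the potential. For hardness you blow up each player of a two-action potential game into a constant-size group, aggregated by a gadget with property ($\star$) obtained from the Lov\'asz Local Lemma with the same distance-$2$ dependency bound. Composing the original payoff with this gadget preserves the potential, while any improving deviation in the decoded profile forces some group member to worst-respond.

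The differences are cosmetic. Your tie-breaker $\sum_{i,j} a_{ij}$, together with the scaling by $M$ and the need for an integral BQP potential, is harmless but unnecessary, since the paper simply sets $\hat u_{i\ell} = u_i \circ \rho$. For membership, the paper spells out how to evaluate $\phi$ up to an additive constant by telescoping utility differences, which your sketch leaves implicit.
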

\Cref{thm:potential-complexity} follows from a series of lemmas. \Cref{lem:PLSmembership} below establishes $\PLS$ membership. The remaining arguments and results in this section establish $\PLS$-hardness.

\begin{lemma}\label{lem:PLSmembership}
    For every integer $m \geq 2$, the search problem $m$-$\NWRfind$ of finding a no-worst-response action profile, restricted to potential games with $m$ actions per player, is in $\PLS$.
\end{lemma}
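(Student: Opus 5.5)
The plan is to reduce $m$-$\NWRfind$ on potential games to a local search problem whose local optima are exactly, or at least contained in, the no-worst-response profiles. The natural candidate is the $\PLS$ problem of finding a pure Nash equilibrium in a potential game: every pure Nash equilibrium is a no-worst-response profile when $m \geq 2$. This holds because a best response $a_i$ has at least one other action $a_i'$ with $u_i(a_i',a_{-i}) \leq u_i(a_i,a_{-i})$, so $a_i$ is not strictly worse than every alternative. We can therefore set up the search directly. Feasible solutions are action profiles $a \in A$, which have polynomial length. A starting solution is, say, the all-zero profile. The neighbourhood of $a$ is the set of profiles differing from $a$ in one player's action.

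The delicate point is the objective function. A potential $\phi$ exists but is not given as a circuit, so we cannot evaluate $v(s)=\phi(s)$ directly. I would handle this in one of two ways.

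The first route applies if the input format of potential games supplies a potential circuit, as in the standard formalism of \citet*{FabrikantPT04}. In that case we set $v = \phi$, after scaling the rational outputs by a common denominator so that values are integers. This common denominator is computable from the circuit output sizes, and the scaling preserves orderings.

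The second route applies if $\phi$ is not supplied. Then I would avoid needing $\phi$ by invoking the general fact that $\PLS$ is closed under polynomial-time reductions. We would give a $\PLS$-reduction from $m$-$\NWRfind$ to the known $\PLS$ problem of finding a pure Nash equilibrium in potential games \citep*{FabrikantPT04}. Here $\push$ is the identity map, and $\pull$ is the identity on solutions. This is valid because every pure Nash equilibrium is a no-worst-response profile. It also meets the definition of a reduction, since equilibria always exist in potential games, so solvable instances map to solvable instances.

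For the direct formulation, the step "neighbours are easy to enumerate and compare" works as follows. Given $a$, we iterate over the $n(m-1)$ unilateral deviations and evaluate $u_i$ via its circuit. If some deviation strictly improves $u_i$, it strictly improves $\phi$ by the potential property, and we return it. Otherwise $a$ is a pure Nash equilibrium, hence a local optimum and, by the observation above, a valid output. Verification that a returned profile is no-worst-response is also polynomial: for each player we compare $m$ circuit evaluations.

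The main obstacle is the formal subtlety about how the potential is represented. Strictly, $\PLS$ membership asks for a polynomial-time computable integer objective, so I would state explicitly that the reduction to $\PLS$ goes through the pure Nash problem in potential games. That problem's membership in $\PLS$ is the cited fact, as in \cite{FabrikantPT04}, and we compose with the trivial reduction described above.
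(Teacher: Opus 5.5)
Your second route is correct, given two things: that finding a pure Nash equilibrium in potential games ($\PNEfind$) is in $\PLS$, which the paper also takes as known, and that $\PLS$ is closed under polynomial-time reductions, which holds for the TFNP-style definition of $\PLS$. Since $m\ge 2$, every pure Nash equilibrium is a no-worst-response profile, so identity $\push$ and $\pull$ suffice.

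The paper instead builds the local-search structure for $\NWRfind$ directly. Its key step is exactly the one you declare impossible. Although $\phi$ is not supplied, it can be evaluated up to an additive constant from the utility circuits alone. Fix a reference profile $a^*$, move from $a^0=a^*$ to $a^n=a'$ by changing one coordinate at a time, and telescope: $\phi(a')-\phi(a^*)=\sum_{k=1}^n[u_k(a^k)-u_k(a^{k-1})]$. This needs $2n$ circuit evaluations and has polynomial bit length.

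The paper then defines $\hat\phi$ as a constant exceeding the range of $\phi-\phi(a^*)$ at no-worst-response profiles, and as $\phi-\phi(a^*)$ elsewhere. Every deviation of a worst-responding player strictly raises $\phi$. Hence the local optima of $\hat\phi$ under unilateral deviations are exactly the no-worst-response profiles.

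This gives two things your route lacks:
\begin{itemize}
\item Self-containedness in the circuit-game model. Your argument silently depends on the cited membership holding for games where no potential circuit is given, and the telescoping identity is what justifies that.
\item Membership under the paper's literal definition, in which solutions are the local optima, without appealing to downward closure.
\end{itemize}
Your route is shorter and makes transparent that $\NWRfind$ is no harder than $\PNEfind$ on potential games.

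A smaller point about your first route: the ``common denominator'' does not work as stated. If numerators and denominators have at most $k$ bits, the least common multiple of all possible denominators has exponentially many bits. Multiplying by $2^{2k}$ and rounding down preserves strict order instead. That route does not apply here anyway, since the paper's definition only asserts that some $\phi$ exists.
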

\begin{proof}[Proof of \Cref{lem:PLSmembership}]
Suppose we have a potential game with an ordered player set $N = \{1,2,...,n\}$ and $m$ actions per player. Firstly, note that the number of actions for each player is constant and in our model it is straightforward to generate an initial action profile $a^*$. Secondly, we see that we may evaluate the potential $\phi$ up to a constant term.
Indeed, by assumption, we have access to a valid starting solution $a^* \in A$. Let $a'$ be any other action profile. We shall iteratively let $a^k \in A$ be defined, for $0 \leq k \leq n$ and for $a^k_i = a_{i}'$ if $i \leq k$ and $a^k_i = a^*_{i}$ otherwise. Now, by the potential game property,
\[
    \phi(a') - \phi(a^*) = \sum_{k = 1}^n \phi(a^k) - \phi(a^{k-1}) = \sum_{k \in N} u_k(a^k) - u_k(a^{k-1}).
\]
Therefore, the potential is computable at any action profile $a'$, up to a constant which renders $\phi(a^*) = 0$. Since all utilities are rational numbers whose numerators and denominators have bit representation of length bounded in the size of the circuit, two facts follow. First we notice that, denoting by $|C| = \max_i |C_i|$ the maximum size of players' circuits, for any $k = 1,2,...,n$, the bit representation of $u_k(a^k)-u_k(a^{k-1})$ requires at most $4|C|+1$ bits; with $2|C|+1$ bits for the numerator and $2|C|$ bits for the denominator. This is because we have $u_i(a^k) - u_i(a^{k-1}) = a_k/b_k + c_k / d_k = (a_k d_k + b_k c_k) / (b_k d_k)$, where $(a_k,b_k,c_k,d_k)$ have bit representations of size $\leq |C|$. As a consequence, $\phi(a')-\phi(a_*)$ requires $O(n|C|)$ bits in its expression as the ratio of two integers in binary. Second, for any action profile $a'$, the potential function can take value at most $n \cdot 2^{|C|}$. This is because, in the above we have $u_i(a^k) - u_i(a^{k-1}) \leq 2^{|C|}$ for each $k$ due to the fact that the utilities are represented as the ratio of two numbers in binary representation, output by a circuit of size $\leq |C|$.

We shall therefore fix an auxiliary potential for the game, letting $\hat{\phi}(a) = \phi(a) -\phi(a^*)$ if some player $i$ is worst-responding, and $\hat{\phi}(a) = n \cdot 2^{|C|}+1$ otherwise. At each action profile $a$, the potential may be evaluated in polynomial time, by evaluating $u_i(a'_i,a_{-i})$ for each player $i$ and each action $a'_i \in A_i$ and checking for the worst-response property, which requires $1 + (m -1)n$ utility evaluations and $(m-1)n$ pairwise comparisons of the values of the utilities. Checking for a local improvement in $\hat{\phi}$ is similarly possible in polynomial time.
\end{proof}

We now turn to showing $\PLS$-hardness. We note that the reduction from $\circuitSATexists$ we invoked in the proof of \cref{thm:main} is insufficient to infer \cref{thm:potential-complexity}; in particular, the local term of \eqref{eq:payoffs} is cyclic, so the reduced game is not a potential game. Thus, to prove hardness, our first step is to develop a $\PLS$-reduction $(\push,\pull)$ from $\PNEfind$ to $\NWRfind$ that preserves the potential property of the game.

\begin{reduction}[Reduction from $\PNEfind$ to $\NWRfind$]
\label{def:potential-reduction}
Fix integers $\hatm \geq 2$ and $q = 6 \hatm 2^{\hatm}$. Suppose we are given a game $g=(\,N, \,\{A_i\}_{i \in N}, \,\{u_i\}_{i \in N}\,) \in \Games$ with $2$ actions per player.
\footnote{We note here that the reduction, and the subsequent proofs, extend immediately to any fixed number $m$ of actions for the players in the game $g$, but for our purposes $m=2$ suffices.}

From this, we now construct a game $\hat{g} = (\,\hat N, \,\{\hat A_i\}_{i \in \hat N}, \,\{ \hat u_i\}_{i \in \hat N}\,) \in \Games$ with $\hatm$ actions per player as follows. The set of players
\[
\hat N \coloneqq \{(i,\ell) : i \in N \text{ and } \ell \in \{1, \dots, q\}\}
\]
consists of $n$ groups $[i] \coloneqq \{(i,\ell) : \ell \in \{1,\dots,q\}\}$, one for each $i \in N$. Each player $(i, \ell)$ has action set $\hat A_{i\ell} \coloneqq \{0,\dots, \hatm -1\}$. For any group $[i]$, define $a_{[i]}$ as the restriction of action profile $a$ to the group's players, and let $\hat{A}_{[i]} = \{0, \ldots, \hatm-1\}^q$ be the group's action space. To define the payoff functions, we first introduce a \textit{circuit gadget} $\rho_i: \hat{A}_{[i]} \to A_{i}$ for each $i \in N$. This gadget aggregates the actions $\hat{a}_{[i]}$ of a group in the game $\hat{g}$ into a single action $\rho_i(\hat{a}_{[i]})$ of player $i$ in the game $g$. By applying the gadget to all groups' actions, we map an action profile $\hat{a}$ in the new game $\hat{g}$ back to an action profile $\rho(\hat{a})$ in the original game $g$ by defining $\rho: \hat{A} \to A$ as
\[
    \rho(\hat{a}) = (\rho_i(\hat{a}_{[i]}))_{i \in N}.
\]
We now define the payoff of each player $(i,\ell)$ in game $\hat{g}$ as
\begin{equation}
\label{eq:potential-payoffs}
    \hat{u}_{i\ell}(\hat{a}) = u_i(\rho(\hat{a})).
\end{equation}
Crucially, we require that each gadget $\rho_i$ satisfies the following property ($\star$): at each action profile in $\hat A_{[i]}$ among players in the group $[i]$, and for each action $a_i$ not currently taken by player $i$ in $g$, there is some player in the group for whom any action change must change the action of player $i$ to $a_i$. Formally:
\begin{itemize}[leftmargin=*, labelindent=10pt]
  \item[($\star$)] For each $i\in N$, each $\hat a_{[i]} = (\hat a_{i\ell}, \hat a_{-i\ell}) \in \hat A_{[i]}$, and each $a_i \in A_i \setminus \{\rho_i(\hat a_{[i]})\} $, there is some $\ell \in \{1,\dots,q_i\}$ such that $\rho_i(\hat a'_{i\ell}, \hat a_{-i\ell} ) = a_i$ holds for each $\hat a'_{i\ell} \neq \hat a_{i\ell}$.\label{item:rho-property}
\end{itemize}
Setting $m=2$ in \Cref{lem:gadget} below establishes the existence of circuit gadgets satisfying ($\star$). \hfill$\triangle$
\end{reduction}

\begin{lemma}
\label{lem:gadget}
Fix integers $m, \hatm \geq 2$, and $q \geq 16 m^{\hatm-1} \hatm \ln m$. There exists a function 
$
f : \{1,\dots,\hatm\}^q \to \{1,\dots,m\}
$
satisfying the following: for any $x \in \{1,\dots,\hatm\}^q$ and any $k \in \{1,\dots,m\} \setminus \{f(x)\}$, there is some $\ell \in \{1,\dots,q\}$ such that $f(x_\ell' , x_{-\ell}) = k$ for each $x_\ell' \neq x_\ell$.
\end{lemma}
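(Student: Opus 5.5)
We will prove \Cref{lem:gadget} with the probabilistic method. The plan is to pick a random function $f$ of a structured form and show, via the Lovász Local Lemma, that with positive probability every "bad event" is avoided.

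\textbf{Structure of $f$.} Partition the $q$ coordinates into $q/\hatm$ blocks of size $\hatm$; we ignore rounding, and we keep an explicit constant slack in $q$ for this. On a block $B$, a "witness" for target $k$ must be a coordinate $\ell$ such that changing $x_\ell$ to any of its $\hatm-1$ other values yields $f=k$. A fully independent random $f$ makes this event have probability about $m^{-(\hatm-1)}$. However, the events for different $x$ are then highly dependent, and the union over $\hatm^q$ points is hopeless.

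So instead let $f$ be a sum over blocks. Take $f(x)=\sum_{B} h_B(x_B)\bmod m$, with independent uniformly random functions $h_B:\{1,\dots,\hatm\}^{\hatm}\to\{1,\dots,m\}$ (identifying $\{1,\dots,m\}$ with $\mathbb{Z}_m$). Changing a coordinate $\ell\in B$ changes only $h_B$. The requirement "$f(x'_\ell,x_{-\ell})=k$ for all $x'_\ell\neq x_\ell$" becomes $h_B(x'_\ell,x_{B\setminus\ell})=h_B(x_B)+(k-f(x))$ for all $\hatm-1$ alternatives. So what matters is only the shift $d=k-f(x)\neq 0$ and the local configuration $x_B$.

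\textbf{Reduction to a local property.} Call the pair (block $B$, configuration $y\in\{1,\dots,\hatm\}^{\hatm}$, shift $d\in\mathbb{Z}_m\setminus\{0\}$) \emph{good} if some coordinate $\ell$ of $y$ has all its $\hatm-1$ neighbours $y'$ satisfying $h_B(y')=h_B(y)+d$. A single block cannot be good for every $(y,d)$ simultaneously, since for fixed $y$ and $\ell$ at most one $d$ works. Hence we need many blocks. The lemma follows if, for every $x$ and every $d\neq0$, some block $B$ has $(B,x_B,d)$ good.

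This is still a statement over all $x$, so we change the construction. Let $f$ depend on independent random functions on \emph{overlapping} small windows. Alternatively, use a larger alphabet trick: each coordinate $\ell$ gets a random "label map" $g_\ell:\{1,\dots,\hatm\}\to\mathbb{Z}_m$ and $f(x)=\sum_\ell g_\ell(x_\ell)$. Then a witness for shift $d$ at coordinate $\ell$ needs $g_\ell(v)-g_\ell(x_\ell)=d$ for all $v\neq x_\ell$. This forces $g_\ell$ to be constant off one point, which is too rigid.

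I therefore expect the real construction to be the fully random $f$ with an LLL argument. The bad event $E_{x,k}$ is that no coordinate $\ell$ is a witness for $(x,k)$. For a fixed $x$, the $q$ candidate witness events for distinct $\ell$ involve disjoint sets of function values, namely the neighbours of $x$ along line $\ell$. Conditioned on $f(x)$, they are independent, each with probability $m^{-(\hatm-1)}$. Hence $\Pr[E_{x,k}]\le(1-m^{-(\hatm-1)})^q\le e^{-q/m^{\hatm-1}}\le m^{-16\hatm}$.

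$E_{x,k}$ depends only on the values of $f$ on $x$ and its Hamming-1 neighbours. So $E_{x,k}$ and $E_{x',k'}$ are dependent only if $x,x'$ are at Hamming distance at most $2$. This gives degree $D\le m\cdot(q\hatm)^2$, which is polynomial in $q$ and not exponential.

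\textbf{Main obstacle.} The LLL condition $e\,p\,(D+1)\le1$ needs $m^{-16\hatm}\cdot m q^2\hatm^2\lesssim1$, but $q$ itself grows like $m^{\hatm-1}$. Writing $p\le e^{-q/m^{\hatm-1}}$ against $D\sim q^2$, the choice $q\ge 16m^{\hatm-1}\hatm\ln m$ gives $q/m^{\hatm-1}\ge 16\hatm\ln m$. The product becomes $e\,m^{1-16\hatm}q^2\hatm^2$, and we must check $q^2\le m^{16\hatm-2}/(e\hatm^2)$. This holds only if $q$ is not much larger than the threshold. So I would apply the argument with $q$ exactly at (the ceiling of) the bound, or take $p\le e^{-q/m^{\hatm-1}}$ to absorb $q^2$, since $q^2e^{-q/c}$ is decreasing for $q>2c$.

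The step needing care is this bookkeeping, together with the fact that $\{1,\dots,\hatm\}^q$ is infinite-dimensional in $q$ only in the sense of size. It is finite, so the symmetric LLL applies directly and yields an $f$ avoiding all $E_{x,k}$, which is exactly the claimed property.
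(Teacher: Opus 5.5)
Your final argument is essentially the paper's proof: a uniformly random colouring, the bound $(1-m^{-(\hatm-1)})^q$ from the disjointness of the $q$ lines through $x$, dependence only between points at Hamming distance at most $2$, and the symmetric Lov\'{a}sz Local Lemma. It differs only cosmetically, by indexing bad events by $(x,k)$ rather than union-bounding over $k$, and by handling $q$ above the threshold via the monotonicity of $q^2e^{-q/m^{\hatm-1}}$ for $q>2m^{\hatm-1}$ instead of the paper's use of $x^2\le e^x$. The one step you leave unchecked is the LLL inequality at $q_0=16m^{\hatm-1}\hatm\ln m$, which reduces to $256e\,\hatm^4(\ln m)^2\le m^{14\hatm+1}$ and holds with enormous room to spare.
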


\begin{proof}[Proof of \Cref{lem:gadget}]
We start by arguing that the problem is identical to the existence of a certain grid colouring. For any vertex $x$ in a grid $G:=\{1,\dots,\hatm\}^q$ and any $\ell \in \{1,\dots,q\}$, we refer to
\[
L_\ell(x) := \{(z,x_{-\ell}) : z \in \{1,\dots,\hatm\}\}
\]
as the \emph{line} going through $x$ in the $\ell$th direction. For each vertex $x$ in the grid, we assign a \emph{colour} $f(x) \in \{1,\dots,m\}$. We'll say that a vertex $x \in G$ is \emph{bad} if there is some $k \neq f(x)$ such that, for every direction $\ell$, there is some vertex $y \in L_\ell(x) \setminus \{x\}$ for which $f(y) \neq k$. Proving the lemma is therefore equivalent to showing that there is some value of $q$ (which we determine below) and a function $f$ that assigns colours from $\{1,\dots,m\}$ to vertices in the grid $G$ such that no vertex is bad.

We now show that such a colouring exists via the probabilistic method, using the Lov\'{a}sz Local Lemma (LLL); see \cite*{Mitzenmacher_Upfal_2005}. For each vertex in the grid $G$, independently assign a colour uniformly at random from $\{1,\dots,m\}$. For each $x \in G$, let $B_x$ denote the event that $x$ is bad. For the events $\{B_x\}_{x \in G}$, the LLL states that if (i) $\mathbb{P}[ B_x ] \leq p$ for each $x \in G$ and (ii) each $B_x$ is independent of all the other events except for at most $d$ of them, then 
\[
e(d+1)p \leq 1
\]
implies that there is a non-zero probability that none of the vertices are bad. We next determine values for $p$ and $d$ (in terms of $q$) in our problem. It is then sufficient to choose a value of $q$ that is large enough such that the LLL condition holds.

\begin{enumerate}[leftmargin=*, labelindent=0pt,label=(\roman*)]
\item Fix a vertex $x \in G$ and a colour $k \in \{1,\ldots,m\}$. Fixing some direction $\ell \in \{1,\ldots,q\}$, the probability that every $y \in L_\ell(x) \setminus \{x\}$ has colour $k$ is $\lambda \coloneqq (\tfrac{1}{m})^{\hatm - 1}$. The probability that every direction has some vertex whose colour is not $k$ is therefore $(1 - \lambda)^q$. By a union bound over all colours $k \neq f(x)$, we obtain that
\[
\mathbb{P}[B_x] \leq (m-1) (1 - \lambda )^q \leq m e^{-\lambda q} \eqqcolon p.
\]
\item The event $B_x$ depends only on the colours of vertices lying on a line through $x$. Therefore, the events $B_x$ and $B_{x'}$ are independent whenever $x$ and $x'$ differ in at least three entries (because then there is no vertex $y$ that lies on a line through both $x$ and $x'$). It follows that the event $B_x$ is mutually independent of all but at most $d$ other events, where
\[
d \leq 1 + q(\hatm -1) + \binom{q}{2}(\hatm - 1)^2 \leq q^2 \hatm^2 - 1.  
\]
The term 1 counts $x$ itself, the term $q(\hatm- 1)$ counts all vertices at a Hamming distance of 1 from $x$, and the term $\binom{q}{2}(\hatm - 1)^2$ counts all vertices at a Hamming distance of 2 from $x$.
\end{enumerate}

Using the upper bounds from arguments (i) and (ii) above, the LLL condition becomes $e q^2 \hatm^2 m e^{-\lambda q} \leq 1$. Re-arranging gives us
\[
\frac{q^2}{e^{\lambda q}} \leq \frac{1}{e \hatm^2 m} .
\]
Since $x^2 \leq e^x$ for all $x >0$, we can set $x = \tfrac{\lambda}{2}q$ to conclude that $\tfrac{q^2}{e^{\lambda q}} \leq \tfrac{4}{\lambda^2} e^{-\tfrac{\lambda}{2}q}$. For the LLL condition to hold, it is therefore sufficient to have
\[
\frac{4}{\lambda^2} e^{-\tfrac{\lambda}{2}q} \leq \frac{1}{e \hatm^2 m},
\]
which re-arranges to
\[
q \geq \frac{2}{\lambda}\left(\ln(4 e \hatm^2 m) - 2\ln\lambda\right) =  2 m^{\hatm-1} \left(  1 + \ln(4) + 2\ln(\hatm) + (2 \hatm - 1)\ln(m) \right) .
\]
Upper bounding each element of the sum in parentheses by $2 \hatm \ln(m)$ gives a crude upper bound of $\left(  1 + \ln(4) + 2\ln(\hatm) + (2 \hatm - 1)\ln(m) \right) \leq 8 \hatm \ln(m)$. For the LLL condition to hold, it is therefore sufficient that
\[
q \geq 16 m^{\hatm-1} \hatm \ln(m) ,
\]
and this completes the proof. 
\end{proof}

We now prove that \cref{def:potential-reduction} is indeed a $\PLS$-reduction, in a series of lemmas.

\begin{lemma}
\label{lemma:g-hat-is-potential-game}
If game $g$ is a potential game, then $\hat{g}$ is a potential game.
\end{lemma}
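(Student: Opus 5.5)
The plan is to exhibit the potential explicitly. If $\phi: A \to \mathbb{R}$ is a potential for $g$, we define $\hat\phi := \phi \circ \rho$ on $\hat A$, that is, $\hat\phi(\hat a) = \phi(\rho(\hat a))$, and verify the defining identity of \Cref{def:potential-game} for every player $(i,\ell) \in \hat N$.

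Fix $\hat a \in \hat A$, a player $(i,\ell)$, and an alternative action $\hat a'_{i\ell} \in \hat A_{i\ell}$. Write $\hat b = (\hat a'_{i\ell}, \hat a_{-i\ell})$. The key structural observation is that $\hat a$ and $\hat b$ differ only in a coordinate belonging to group $[i]$. Since $\rho_j$ depends only on $\hat a_{[j]}$, we have $\rho_j(\hat b_{[j]}) = \rho_j(\hat a_{[j]})$ for every $j \neq i$. Hence $\rho(\hat b)$ and $\rho(\hat a)$ differ at most in coordinate $i$: $\rho(\hat b) = (a_i', \rho(\hat a)_{-i})$, where $a_i' := \rho_i(\hat b_{[i]})$.

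Using \eqref{eq:potential-payoffs} and the potential property of $g$ applied to player $i$ deviating from $\rho(\hat a)$ to $a_i'$, we get
\[
\hat u_{i\ell}(\hat b) - \hat u_{i\ell}(\hat a) = u_i(a_i', \rho(\hat a)_{-i}) - u_i(\rho(\hat a)) = \phi(a_i', \rho(\hat a)_{-i}) - \phi(\rho(\hat a)) = \hat\phi(\hat b) - \hat\phi(\hat a).
\]
If $a_i' = \rho_i(\hat a_{[i]})$, both sides vanish and the identity holds trivially, so no case split is needed.

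There is no real obstacle; the only point requiring care is the locality of $\rho$, namely that a deviation by $(i,\ell)$ can change only the $i$-th aggregated action. Note also that property ($\star$) of the gadget is not needed here; it matters only later, for the correspondence between no-worst-response profiles of $\hat g$ and Nash equilibria of $g$. Finally, $\hat\phi$ is real-valued, which is all \Cref{def:potential-game} requires.
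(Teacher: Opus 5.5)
Your proposal is correct and matches the paper's proof. Both define $\hat\phi = \phi \circ \rho$, observe that a unilateral deviation by $(i,\ell)$ changes only the $i$-th coordinate of $\rho(\hat a)$, and then transfer the potential identity of $g$ through the payoff definition $\hat u_{i\ell} = u_i \circ \rho$.
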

\begin{proof}[Proof of \Cref{lemma:g-hat-is-potential-game}]
Suppose game $g$ is a potential game, and let $\phi$ be its potential function. Fix a player $(i,\ell)$ and an action profile $\hat{a} = (\hat{a}_{i\ell}, \hat{a}_{-i\ell})$ in the game $\hat{g}$. Now, let $\hat{a}' = (\hat{a}'_{i\ell}, \hat{a}_{-i\ell})$ be an action profile in which the player unilaterally deviates from $\hat{a}$. We also define $a = \rho(\hat{a})$ and $a' = \rho(\hat{a}')$. By construction of $\rho$, we see that $a' = (\rho_i(\hat{a}'_{[i]}), a_{-i})$.  Applying the payoff definition \eqref{eq:potential-payoffs}, we see that
\begin{align*}
u_{i\ell}(\hat{a}'_{i\ell}, \hat{a}_{-i\ell}) - u_{i\ell}(\hat{a})
    &= u_i(\rho(\hat{a}'_{i\ell}, \hat{a}_{-i\ell})) - u_i(\rho(\hat{a})) \\
    &= u_i(a'_i, a_{-i}) - u_i(a) \\
    &= \phi(a'_i, a_{-i}) - \phi(a) \\
    &= \phi(\rho(\hat{a}'_{i\ell}, \hat{a}_{-i\ell})) - \phi(\rho(\hat{a})).
\end{align*}
So $\hat{\phi}$ defined by $\hat{\phi}(\hat{a}) = \phi(\rho(\hat{a}))$ for any action profile $\hat{a}$ of game $\hat{g}$ is a potential function for game $\hat{g}$, proving that $\hat{g}$ is a potential game.
\end{proof}

Every potential game has a pure strategy Nash equilibrium, which is necessarily a local maximum of the potential function \citep*{monderer1996potential}. Since any pure Nash equilibrium is necessarily a no-worst-response action profile, we infer the existence of the latter:
\begin{corollary}
\label{lemma:game-admits-nwr-profile}
Game $\hat{g}$ admits a no-worst-response action profile $\hat{a}$.
\end{corollary}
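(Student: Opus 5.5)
The plan is to derive the corollary from \Cref{lemma:g-hat-is-potential-game} together with the classical existence result for potential games. Throughout, I assume (as in the intended use of \cref{def:potential-reduction}) that the input game $g$ is a potential game with potential $\phi$. By \Cref{lemma:g-hat-is-potential-game}, $\hat g$ is then a potential game, with potential $\hat\phi(\hat a)=\phi(\rho(\hat a))$.

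The first step is to show that $\hat g$ has a pure Nash equilibrium. The action space $\hat A=\{0,\dots,\hatm-1\}^{nq}$ is finite, so $\hat\phi$ attains a global maximum at some profile $\hat a^*$. Suppose some player $(i,\ell)$ could deviate profitably from $\hat a^*$ to an action $\hat a'_{i\ell}$. By the potential property in \Cref{def:potential-game},
\[
\hat\phi(\hat a'_{i\ell},\hat a^*_{-i\ell})-\hat\phi(\hat a^*)=\hat u_{i\ell}(\hat a'_{i\ell},\hat a^*_{-i\ell})-\hat u_{i\ell}(\hat a^*)>0,
\]
which contradicts the maximality of $\hat a^*$. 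Hence every $\hat a^*_{i\ell}$ is a best response to $\hat a^*_{-i\ell}$, so $\hat a^*$ is a pure Nash equilibrium. This is the standard argument of \citet*{monderer1996potential}, and one could simply cite it instead.

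The second step is to show that a pure Nash equilibrium is a no-worst-response profile. Each player in $\hat g$ has $\hatm\ge 2$ actions, so for every player $(i,\ell)$ there is at least one alternative action $a'\neq\hat a^*_{i\ell}$. A worst response requires $\hat u_{i\ell}(\hat a^*)<\hat u_{i\ell}(a',\hat a^*_{-i\ell})$ for every such $a'$. Being a best response instead gives $\hat u_{i\ell}(\hat a^*)\ge \hat u_{i\ell}(a',\hat a^*_{-i\ell})$ for all of them. Since the set of alternatives is nonempty, these two conditions are incompatible, so no player worst-responds at $\hat a^*$.

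No step is a real obstacle. The only points needing care are, first, that the hypothesis that $g$ is a potential game is carried over so that \Cref{lemma:g-hat-is-potential-game} applies. Second, the condition $\hatm\ge2$ is what rules out the degenerate case in which a worst response holds vacuously because a player has no alternative actions.
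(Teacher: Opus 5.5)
Your proposal is correct and follows the paper's argument: $\hat g$ is a potential game by \Cref{lemma:g-hat-is-potential-game}, so it has a pure Nash equilibrium (a maximiser of the potential), and any pure Nash equilibrium is a no-worst-response profile. You also make explicit two points the paper leaves implicit: that $g$ must be a potential game, and that $\hat{m} \geq 2$ is needed so a best response cannot also vacuously be a worst response.
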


More interestingly, our reduction ensures that no-worst-response action profiles stand in correspondence with the pure Nash equilibria of the original game.

\begin{lemma}
\label{lemma:pls-reduction-property}
Fix some action profile $\hat{a}$ of game $\hat{g}$ and profile $a = \rho(\hat{a})$ of game $g$. Then $\hat{a}$ is a no-worst-response profile of $\hat{g}$ if and only if $a$ is a pure Nash equilibrium of $g$.
\end{lemma}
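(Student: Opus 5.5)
The plan is to prove the two directions separately. Both arguments rest on the payoff definition \eqref{eq:potential-payoffs} and on property ($\star$) of the gadgets $\rho_i$. Since $g$ has two actions per player, $A_i=\{0,1\}$, and for each player there is exactly one action $\bar a_i \neq a_i$ in $g$.

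\emph{($\Leftarrow$)} Suppose $a=\rho(\hat a)$ is a pure Nash equilibrium of $g$, and fix a player $(i,\ell)$. Any unilateral deviation $\hat a'_{i\ell}$ changes only the group $[i]$. It therefore yields the profile $(\rho_i(\hat a'_{[i]}),a_{-i})$, and the player receives payoff $u_i(\rho_i(\hat a'_{[i]}),a_{-i}) \le u_i(a)$ by the Nash property. Hence no deviation is strictly better than staying. In particular, $\hat a_{i\ell}$ is not a worst response: a worst response requires every alternative to be strictly better, and there is at least one alternative because $\hatm\ge 2$. So $\hat a$ is a no-worst-response profile.

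\emph{($\Rightarrow$)} This is the contrapositive direction. Suppose $a$ is not a pure Nash equilibrium, so some player $i$ has $u_i(\bar a_i,a_{-i})>u_i(a)$ with $\bar a_i\neq a_i=\rho_i(\hat a_{[i]})$. Apply ($\star$) with target action $\bar a_i$. This gives an index $\ell$ such that every deviation $\hat a'_{i\ell}\neq\hat a_{i\ell}$ satisfies $\rho_i(\hat a'_{i\ell},\hat a_{[i]\setminus(i,\ell)})=\bar a_i$. Player $(i,\ell)$ then receives $u_i(\bar a_i,a_{-i})$ from each of her other actions, which is strictly more than her current payoff $u_i(a)$. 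Thus $\hat a_{i\ell}$ is a worst response, and $\hat a$ is not a no-worst-response profile.

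The main point needing care is this application of ($\star$). The property must guarantee that \emph{every} alternative action of the designated group member lands on the improving action $\bar a_i$, not just some of them. That is exactly the strict-inequality requirement in the definition of worst response. I would also note, via \Cref{lem:gadget} with $m=2$, that $q = 6\hatm 2^{\hatm}$ meets the bound $16\cdot 2^{\hatm-1}\hatm\ln 2 \approx 5.55\,\hatm 2^{\hatm}$, so gadgets satisfying ($\star$) indeed exist. Everything else is direct unpacking of definitions.
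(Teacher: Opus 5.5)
Your proposal is correct and follows the paper's proof step for step. For $(\Leftarrow)$, a deviation by $(i,\ell)$ only moves player $i$'s component, so the Nash property bounds every deviation payoff; you also make explicit that $\hatm\ge 2$ guarantees an alternative action exists. For $(\Rightarrow)$, applying ($\star$) to the improving action yields a group member for whom every alternative is strictly better, and your check that $q = 6\hatm 2^{\hatm}$ meets the bound of \Cref{lem:gadget} with $m=2$ is also right.
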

\begin{proof}[Proof of \Cref{lemma:pls-reduction-property}]
Suppose $a$ is a Nash equilibrium of $g$. We claim that $\hat{a}$ is a pure Nash equilibrium of $\hat{g}$, and hence in particular is a no-worst-response action profile. Let $(i,\ell) \in \hat N$ be any player in $\hat{g}$, and consider any unilateral deviation from $\hat{a}$ to $\hat{a}' := (\hat{a}'_{i\ell}, \hat a_{-i\ell})$ by that player. By construction of $\rho$, we see that $\rho(\hat a')$ and $\rho(\hat a) = a$ can differ only in the $i$th component. Assuming they differ, there must exist some $a_i' \in A_i$ such that $\rho(\hat a') = (a_i',a_{-i})$. Since $a$ is a pure Nash equilibrium in $g$, no such deviation is profitable, so $u_i(\rho(\hat a')) \le u_i(\rho(\hat a)) = u_i(a)$. But by construction of the utilities in $\hat g$, this implies that 
\[
  \hat u_{i\ell}(\hat a') = u_i(\rho(\hat a')) \le u_i(\rho(\hat a)) = \hat u_{i\ell}(\hat a).
\]
So $\hat a$ is a pure Nash equilibrium, and thus a no-worst-response action profile, of $\hat g$.

Conversely, suppose that $\hat a \in \hat A$ is a no-worst-response action profile of $\hat g$ but, for the sake of contradiction, $a = \rho(\hat a) \in A$ is not a pure Nash equilibrium of $g$. Then there exists some player $i \in N$ and some action $a_i' \in A_i$ such that $u_i(a_i',a_{-i}) > u_i(a) = u_i(a_i,a_{-i})$. By property ($\star$), there is a player $(i,\ell)$ in group $[i]$ of the game $\hat g$ for whom playing any $\hat a'_{i\ell} \neq \hat a_{i\ell}$ yields
\[
  \hat u_{i\ell}(\hat a') = u_i(\rho(\hat a')) = u_i(a_i',a_{-i}) > u_i(a) = u_i(\rho(\hat a)) = \hat u_{i\ell}(\hat a).
\]
So for the player $(i,\ell)$ in $\hat g$, \emph{every} alternative action strictly increases her payoff. In other words, $(i,\ell)$ is worst-responding at $\hat a$ in $\hat g$.
\end{proof}

We address a subtlety in the reduction. For any constants $m$ and $\hatm$, the function $f$ in \Cref{lem:gadget} can be implemented as a constant-size Boolean circuit. In our reduction, the number of actions $m=2$ and $\hatm$ of players in games $g$ and $\hat{g}$ are fixed, and for each possible $\hatm$ we can assume that the circuit gadget has constant size and is available as a precomputed circuit fragment of constant size. When our reduction thus constructs Boolean circuits for the payoff function $u_{i\ell}$ of each player $(i,\ell)$ in $\hat{g}$, it thus takes constant computation time to insert each circuit gadget (a total number of $n$ times; namely, one per gadget) into the overall circuit.

\begin{proof}[Proof of \cref{thm:potential-complexity}]
We first see that the reduction is polynomial time. The game $\hat{g}$ has $n q$ players and $\hatm$ actions. As $\hatm$ is a fixed constant and $q = 6\hatm 2^{\hatm}$ is a constant, the number of players of $\hat{g}$ is linear in the number $n$ of players in the original game $g$. The discussion above on the constant-size circuit gadget also makes it straightforward to see that the Boolean circuits for payoff functions $u_{i\ell}$ of game $\hat{g}$ can be constructed in polynomial time. \Cref{lemma:g-hat-is-potential-game,lemma:game-admits-nwr-profile,lemma:pls-reduction-property} show that we have constructed a $\PLS$-reduction, so we see that the search problem of finding a no-worst-response action profile in a potential game with at most $\hatm$ actions is $\PLShard$.

Finally, since \Cref{lem:PLSmembership} established $\PLS$ membership, the problem is $\PLScomplete$.
\end{proof}

Our reduction for \cref{thm:potential-complexity} requires a bounded number of actions for each player in the target game $\hat{g}$; otherwise, the circuit gadgets cannot be assumed to be pre-computed and have constant size.

\section{Non-universal guarantees}

Thus far we considered minimal rationality guarantees for all players simultaneously and established that finding no-worst-response action profiles is as ``hard'' as finding pure Nash equilibria. We thus turn to the question whether action profiles where only some fraction ($<1$) of players are guaranteed to play ``reasonably good'' actions are computationally tractable.

We identify a threshold, where on one side, the problem becomes tractable and, on the other side, the problem remains intractable. For ease of exposition, we consider games in which each player has the same number of actions.

\begin{theorem}
\label{thm:threshold}
Fix $\alpha, \beta \in (0,1)$ and integer $m \geq 2$. The search problem $m$-$\SEfind$ of finding, in a given game $g\in\Games$ with $n$ players and $m$ actions per player, an action profile in which a fraction at least $\alpha$ of players play a top $\beta$ action is in $\RPclass$ if $\lceil \alpha n \rceil < \beta n $, and is $\NPhard$ if $\alpha m > \lceil \beta m \rceil$.
\end{theorem}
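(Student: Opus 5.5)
The statement has two parts, and I would prove them separately: the $\RPclass$ upper bound by a sampling argument, and the $\NPhard$ lower bound by reusing \Cref{def:circuitSAT-reduction} with a sharper counting of top-$\beta$ players.

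\emph{Tractable side ($\lceil \alpha n\rceil < \beta n$).} The algorithm samples an action profile $a$ uniformly at random, i.e.\ each player independently picks a uniform action. Fix a player $i$ and condition on $a_{-i}$. Order $A_i$ by payoff $u_i(\cdot,a_{-i})$, breaking ties arbitrarily. Each of the first $\lceil \beta m\rceil$ actions in this order has fewer than $\beta m$ actions strictly better than it, so each is a top $\beta$ action. Hence player $i$ plays a top $\beta$ action with probability at least $\lceil\beta m\rceil/m \ge \beta$. Let $X$ be the number of players playing a top $\beta$ action; by linearity $\mathbb{E}[n-X]\le(1-\beta)n$. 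Put $k=\lceil\alpha n\rceil$. Since $k$ is an integer with $k<\beta n$, we have $n-k+1\ge (1-\beta)n+1$. Markov's inequality on $n-X$ then gives
\[
\mathbb{P}[X<k]=\mathbb{P}[n-X\ge n-k+1]\le \frac{(1-\beta)n}{(1-\beta)n+1},
\]
so a single sample succeeds with probability at least $1/((1-\beta)n+1)$. A candidate profile can be verified in polynomial time with $O(mn)$ circuit evaluations, since $m$ is fixed. Repeating the sampling $O(n)$ times and outputting any verified success therefore finds a valid profile with probability at least $1/2$, and it never outputs an invalid one. This is the required one-sided Monte Carlo algorithm.

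\emph{Hard side ($\alpha m>\lceil\beta m\rceil$).} I would reduce from $\circuitSATexists$ via exactly the game $g$ of \Cref{def:circuitSAT-reduction}, with $nm$ players in $n$ groups of size $m$.

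If $f(a)=1$, then by \cref{obs:circuitSAT-reduction} the profile $a$ is a pure Nash equilibrium. Every player is then best-responding, so the fraction of top-$\beta$ players is $1\ge\alpha$.

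If $f(a)=0$, fix a group $[i]$. For player $(i,\ell)$, the local term of \eqref{eq:payoffs} takes all values in $\{0,\dots,m-1\}$ bijectively as her own action varies. Let $r_\ell$ be her current local value. At least $m-1-r_\ell$ of her alternatives give her a strictly larger local term. Her current global term is $0$, and any alternative's global term is $0$ or $m+1$, which can only increase its payoff; so those $m-1-r_\ell$ alternatives remain strictly better. This handles the delicate case where a unilateral deviation happens to make $f=1$. Being a top $\beta$ action therefore requires $m-1-r_\ell<\beta m$, i.e.\ $r_\ell\ge m-\lceil\beta m\rceil$. Since $\ell\mapsto r_\ell$ is a bijection onto $\{0,\dots,m-1\}$, at most $\lceil\beta m\rceil$ players per group are top $\beta$. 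Over all groups the fraction is at most $\lceil\beta m\rceil/m<\alpha$, so no valid profile exists.

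Consequently, the game has a solution if and only if $C$ is satisfiable, and any returned solution has $f(a)=1$. The $\pull$ map reads off $\tau_i=a_{i1}$, a satisfying assignment. The construction is polynomial as in \Cref{thm:main}, so $m$-$\SEfind$ is $\NPhard$.

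The main obstacle is the per-group counting when $f(a)=0$: one must make sure that deviations which flip $f$ to $1$ only worsen a player's rank, so that the bound of $\lceil\beta m\rceil$ top-$\beta$ players per group is not broken. The monotonicity argument above is what I would use to handle it. On the tractable side, the only subtle point is that Markov's inequality yields merely an inverse-polynomial success probability, which is why the $O(n)$ repetitions are needed.
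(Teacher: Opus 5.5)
Your proof is correct and follows the paper's route. For the $\RPclass$ part you use uniform sampling, Markov's inequality applied to $n-X$, and about $n$ independent repetitions; for the hardness part you use Reduction~\ref{def:circuitSAT-reduction} from $\circuitSATexists$. Where you differ is the per-group count when $f(a)=0$: you note that the global term can only add strictly better alternatives, so only the $\lceil \beta m\rceil$ players whose local value is at least $m-\lceil\beta m\rceil$ can play a top $\beta$ action. This single argument replaces the paper's case analysis in Observation~\ref{obs:circuitSAT-reduction-granular} (whether some player can flip $f$ to $1$, the $m=2$ case, and Cases~1--2). You also obtain search-hardness directly through an explicit $\pull$ map, rather than through the decision version.
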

Note that for large numbers of players and actions, the two bounds converge to a unique threshold. By \Cref{thm:threshold}, on the one hand, finding an action profile where a fraction at least $\alpha$ of players best-respond is tractable if $\alpha<1/m$; on the other hand, finding an action profile where a fraction $\alpha$ of players do not worst-respond is tractable if $\alpha<(m-1)/m$ (see \Cref{fig:separation}). This shows that our negative results regarding action profiles where no player worst-responds are in some sense knife-edge. If the number of actions per player is large, an action profile where almost nobody worst-responds is tractable. By contrast, an action profile where a constant fraction of players best-responds is intractable.

\begin{figure}[htb!]
\centering
\begin{tikzpicture}

\begin{groupplot}[
    group style={
        group size=2 by 1,
        horizontal sep=2.5cm,
    },
    width=7cm,
    height=6cm,
    xmin=2, xmax=10,
    ymin=0, ymax=1,
    xtick={2,3,4,5,6,7,8,9,10},
    ytick={0,.5,1},
    axis lines=left,
    axis line style={-},
    xlabel={Number of actions per player},
]

\nextgroupplot[
    ylabel={},
    title={Fraction of players best-responding},
]

\addplot [
    domain=2:10,
    samples=200,
    fill=gray!25,
    draw=none
] {1/x} \closedcycle;

\addplot [
    domain=2:10,
    samples=200,
    thick
] {1/x};

\node at (axis cs:4,0.1) {\small $\RPclass$};
\node at (axis cs:6,0.5) {\small $\NPhard$};

\nextgroupplot[
ylabel={},
    title={Fraction of players not worst-responding},
    yticklabels={0,0.5,1},   
    axis y line=left,
    axis line style={-},
      ylabel style={rotate=0, anchor=center}
]

\addplot [
    domain=2:10,
    samples=200,
    fill=gray!25,
    draw=none
] {(x-1)/x} \closedcycle;

\addplot [
    domain=2:10,
    samples=200,
    thick
] {(x-1)/x};

\node at (axis cs:6,0.5) {\small $\RPclass$};
\node at (axis cs:4,0.9) {\small $\NPhard$};

\end{groupplot}

\end{tikzpicture}
\caption{Schematic comparison of tractability frontier for best-responding and not worst-responding.}
\label{fig:separation}
\end{figure}
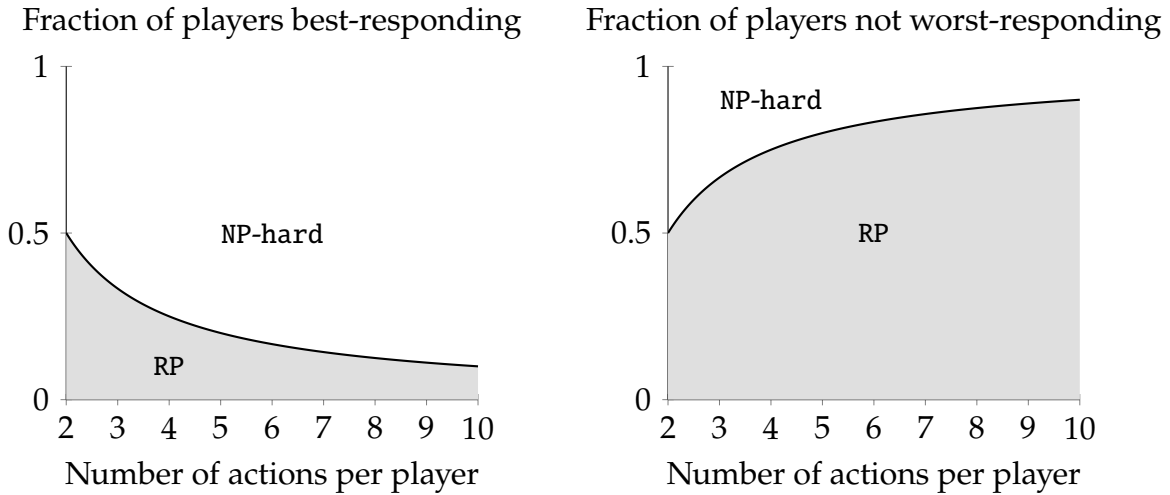

The proof of the first part of \cref{thm:threshold} follows from the more general \cref{prop:random-guarantee} below. The second part of \cref{thm:threshold} is restated and proved in \cref{prop:finer-hardness}.

The intuition for the first ($\RPclass$) part of \cref{thm:threshold} is the following. Consider the boundary where $\alpha\approx \beta$. If each player selects an action at random, with probability $\alpha$ they will play one of their approximately top $\alpha$ actions. Then, by linearity of expectation, when a random action profile is drawn, the expected number of players who are using one of their approximately top $\alpha \cdot m$ actions is $\alpha \cdot n$. We formalize this intuition to arbitrary numbers of actions per player and for $\alpha, \beta$ that do not need to be fixed a priori.

\begin{algorithm}[htb!]
    \begin{algorithmic}[1]
    \Require  $g = (N, \{A_i\}_{i \in N}, \{u_i\}_{i \in N})$ with $n\coloneqq |N|$ and $m_i \coloneqq |A_i|$ for each $i \in N$.\\Parameters $\alpha, \beta \in (0,1)$ such that $\sum_{i \in N} \beta_i \geq \lceil \alpha n \rceil$ where  $\beta_i \coloneqq \lceil {\beta m_i} \rceil / m_i$.
    \State Let $k = 1$ and $a$ be an arbitrary action profile.
        \While {$k \leq n$}
            \State Let $c = 0$.
            \State Let $a$ be an action profile drawn uniformly at random from $A$.
            \For{each player $i \in N$}
                \If{$a_i$ is a top $\beta$ action in response to $a_{-i}$}
                \State Let $c = c+1$
                \EndIf
            \EndFor
            \If{$c \geq \alpha n$}
            \State \Return action profile $a$.
            \EndIf
            \State Let $k = k+1$.
        \EndWhile
        \State \Return action profile $a$ from the loop's last iteration. 
    \end{algorithmic}
    \caption{A false-biased Monte Carlo algorithm. }
    \label{alg:monte-carlo}
\end{algorithm}

\begin{proposition}
\label{prop:random-guarantee}
For any game $g = (N, \{A_i\}_{i \in N}, \{u\}_{i \in N})$ with $n\coloneqq |N|$ players and $m_i \coloneqq |A_i|$ actions per player $i$, and for any $\alpha, \beta \in (0,1)$ such that $\sum_{i \in N} \beta_i \geq \lceil \alpha n \rceil$ where  $\beta_i \coloneqq \lceil {\beta m_i} \rceil / m_i$, \cref{alg:monte-carlo} returns an action profile in which at least $\lceil \alpha n \rceil$ players play a top $\beta$ action with probability $\geq 1/2$.
\end{proposition}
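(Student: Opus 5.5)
The plan is a first-moment argument followed by probability amplification over the $n$ independent iterations of \cref{alg:monte-carlo}. Write $K \coloneqq \lceil \alpha n \rceil$. Since the counter $c$ is an integer, the test $c \geq \alpha n$ is equivalent to $c \geq K$. So the algorithm succeeds as soon as one of its $n$ independent uniform draws $a \in A$ has at least $K$ players playing a top $\beta$ action. Let $X(a)$ be the number of such players and $p \coloneqq \mathbb{P}[X(a) \geq K]$ for a uniform $a$. The algorithm fails only if all $n$ draws fail, which has probability $(1-p)^n$. It therefore suffices to show $p \geq 1/n$, since then $(1-p)^n \leq (1-1/n)^n \leq e^{-1} < 1/2$.

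\textbf{Step 1 (each player individually).} I will show $\mathbb{P}[a_i \text{ is a top } \beta \text{ action in response to } a_{-i}] \geq \beta_i$. Condition on $a_{-i}$; then $a_i$ is uniform on $A_i$, independently of $a_{-i}$. Sort $A_i$ in weakly decreasing order of $u_i(\cdot, a_{-i})$, breaking ties arbitrarily. The action in position $r$ has at most $r-1$ actions strictly better than it. Hence each of the first $\lceil \beta m_i \rceil$ actions has at most $\lceil \beta m_i\rceil - 1 < \beta m_i$ strictly better actions, and so is a top $\beta$ action. Ties can only enlarge this set. So at least $\lceil \beta m_i\rceil$ of the $m_i$ actions qualify, and the conditional probability is at least $\beta_i$. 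Averaging over $a_{-i}$ gives the claim. Linearity of expectation then yields
\[
\mathbb{E}[X] \geq \sum_{i\in N} \beta_i \geq K.
\]

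\textbf{Step 2 (reverse Markov).} Because $0 \leq X \leq n$ and $X$ is integer-valued, on the event $\{X < K\}$ we have $X \leq K-1$. Therefore
\[
K \leq \mathbb{E}[X] \leq (K-1)(1-p) + n p,
\]
which rearranges to $p(n-K+1) \geq 1$, i.e.\ $p \geq 1/(n-K+1) \geq 1/n$. This uses $K \geq 1$, which holds since $\alpha > 0$. Combining this with the amplification bound above completes the proof.

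The only delicate points are in Step 1. First, one must handle ties correctly: the definition counts only \emph{strictly} better actions, so ties help rather than hurt. Second, one must use that $a_i$ is uniform and independent of $a_{-i}$ under the product distribution. Step 2 and the amplification are routine.
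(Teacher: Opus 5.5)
Your proposal is correct and is essentially the paper's proof. Your reverse-Markov step is exactly the paper's application of Markov's inequality to $n - \omega(a)$, followed by the same amplification over the $n$ independent draws; you use $p \geq 1/n$ and $(1-1/n)^n \leq e^{-1}$ where the paper uses $p \geq 1/(n+1)$ and $(n/(n+1))^n \leq 1/2$. Your Step 1 is also slightly more careful than the paper, which asserts $\mathbb{E}[\omega(a)] = \sum_{i \in N} \beta_i$ even though ties can make this strict, so only the inequality $\geq$ you prove holds, and that is all the argument needs.
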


\begin{proof}[Proof of \Cref{prop:random-guarantee}]
For any action profile $a \in A$ in $g$, let $\omega(a)$ denote the number of players whose action at $a$ is a top $\beta$ action. Since \Cref{alg:monte-carlo} samples $n$ action profiles, the probability that it returns a profile satisfying $\omega(a) \geq \lceil \alpha n \rceil$ is given by
\[
1 -  \mathbb{P}[ \,\omega(a) < \lceil \alpha n\rceil \,]^{n} ,
\]
where $\mathbb{P}[ \,\omega(a) < \lceil \alpha n\rceil \,]$ denotes the probability that $a \sim \text{Unif}[A]$ satisfies $\omega(a) < \lceil \alpha n \rceil$. But
\begin{align*}
    \mathbb{P}[ \,\omega(a) < \lceil \alpha n\rceil \,] = \mathbb{P}[ \,\omega(a) \leq \lceil \alpha n\rceil -1 \,] 
    =\mathbb{P}[ \,n - \omega(a) \geq n + 1 - \lceil \alpha n\rceil \,] 
    \leq \frac{n - \mathbb{E}[\omega(a)]}{n + 1 - \lceil \alpha n\rceil} ,
\end{align*}
where the last step is an application of Markov's inequality.

Now, since the probability that player $i$ plays one of their top $\beta_i$ actions is $\beta_i$, we have that $\mathbb{E}[\omega(a)] = \sum_{i \in N} \beta_i$. Moreover, since we assumed that $\sum_{i \in N} \beta_i \geq \lceil \alpha n \rceil$, we obtain
\[
\frac{n - \mathbb{E}[\omega(a)]}{n + 1 - \lceil \alpha n\rceil} \leq \frac{n - \sum_{i \in N} \beta_i}{n + 1 - \sum_{i \in N} \beta_i} \leq \frac{n}{n+1} .
\]
It follows that the probability that \Cref{alg:monte-carlo} returns an action profile $a$ satisfying $\omega(a) \geq \lceil \alpha n \rceil$ is bounded below by
\[
1 - \left( \frac{n}{n+1} \right)^n \geq \frac{1}{2},
\]
for any $n\geq 1$, which completes the proof.\end{proof}

We next turn to the hardness result in \cref{thm:threshold}.

\begin{proposition}
\label{prop:finer-hardness}
Fix $\alpha, \beta \in (0,1)$ and integer $m \geq 2$ such that $\alpha > \lceil\beta m\rceil / m$. The decision problem $m$-$\SEexists$ of deciding whether a given game $g \in \Games$ with $m$ actions per player has an action profile in which a fraction at least $\alpha$ of players play a top $\beta$ action is $\NPcomplete$.
\end{proposition}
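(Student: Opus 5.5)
Membership in $\NP$ is immediate: guess a profile $a$, evaluate $u_i(a_i',a_{-i})$ for each player $i$ and each of the $m$ actions $a_i'$, count how many players are playing a top $\beta$ action, and compare with $\lceil \alpha n\rceil$. For hardness I would adapt \cref{def:circuitSAT-reduction}, keeping the global term and replacing the two-player-per-bit harmonic gadget with a harmonic gadget that forces many players to be badly ranked at once. Write $t := \lceil \beta m\rceil < m$. A player plays a top $\beta$ action exactly when her action has rank at most $t$, i.e. fewer than $t$ actions are strictly better.

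Take a circuit $C$ on inputs $x_1,\dots,x_n$. For each $i$, introduce a group of $m$ players $(i,1),\dots,(i,m)$ with actions $\{0,\dots,m-1\}$. Define $f(a)=1$ as before: each group plays a common action in $\{0,1\}$ and the induced assignment satisfies $C$. Set payoffs
\[
u_{i\ell}(a) = (m+1)f(a) + \left[\left(\ell + \sum_{\ell'} a_{i\ell'}\right) \bmod m\right].
\]
If $f(a)=0$, then for a fixed group and fixed $\ell$, varying $a_{i\ell}$ over $\{0,\dots,m-1\}$ makes the local term run over all of $\{0,\dots,m-1\}$. As long as $f$ stays $0$ under these deviations, player $(i,\ell)$'s current action has local value $(\ell+\Sigma_i)\bmod m$. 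Since $\ell\mapsto(\ell+\Sigma_i)\bmod m$ is a bijection, the $m$ players in a group hold the $m$ distinct ranks. Hence exactly $t$ of them play a top $\beta$ action, and the fraction of top-$\beta$ players is $t/m<\alpha$. If $f(a)=1$, the profile is a pure Nash equilibrium, exactly as in \cref{obs:circuitSAT-reduction}, so all players play top $\beta$ actions.

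The main obstacle is the claim that $f$ stays $0$ under deviations. A single deviation can flip $f$ from $0$ to $1$: if all but one player in a group agree on the bit and the assignment is satisfying, one move completes it. Such a deviation only gives some players an extra strictly better action, which can push their rank down, and that only helps the count stay low. More carefully, a player's number of strictly better actions is at least the number coming from the local term among actions keeping $f=0$, plus possibly more. So in every group at most $t$ players play top $\beta$ actions whenever $f(a)=0$. This monotonicity is the point I must check carefully.

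Thus $f(a)=0$ implies the top-$\beta$ fraction is at most $t/m<\alpha$, while $f(a)=1$ implies the fraction is $1\ge\alpha$. So the game has a qualifying profile iff $C$ is satisfiable. The construction is polynomial, and since $m$ is fixed the local term is a constant-size circuit per player. This gives $\NP$-hardness for $m$-$\SEexists$ and, as in \cref{cor:findNP}, for the search version.
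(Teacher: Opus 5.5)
Your proof is correct, and the construction you end up with is exactly the paper's \cref{def:circuitSAT-reduction}. That reduction already uses groups of $m$ players per input bit, so nothing is actually being replaced. Where you differ is in how you bound the number of top-$\beta$ players when $f(a)=0$.

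The paper proves \cref{obs:circuitSAT-reduction-granular} by case analysis. It first treats profiles where no player can flip $f$ to $1$. It then argues that at most one group can contain a flipping player, handles $m=2$ separately, and uses uniqueness of the flipper when $m\ge 3$. Finally it splits on whether that player's local value lies below or above $m-k$.

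You replace all of this with a single domination argument. You can settle the point you flagged by stating it as a set inclusion. When $f(a)=0$, player $(i,\ell)$'s payoff is her local value $v\le m-1$. An alternative action keeping $f=0$ pays its own local value, and an action flipping $f$ to $1$ pays at least $m+1>v$. Hence every action with local value above $v$ is strictly better, so at least $m-1-v$ actions beat the current one. Being top-$\beta$ then forces $v\ge m-t$, and at most $t$ members of each group satisfy this.

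Your argument is shorter and uniform, since it needs no structural facts about which profiles admit a flipping deviation. The paper's case analysis gives exact per-group counts, which the statement does not need. Working directly with $t=\lceil\beta m\rceil$ instead of the paper's auxiliary $k$ is a harmless simplification. The hypothesis is precisely $t/m<\alpha$, so at most $nt<\alpha nm$ players qualify.
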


To motivate the proof, consider, for any $n = m \geq 2$, a harmonic game with $n$ players in which the players' actions are $\{0, \ldots, m-1\}$, and suppose that their utilities are given by $u_i(a) = \left(i + \sum_{j \in N} a_j \right) \mkern-10mu \mod{m}$. At every action profile of the game, exactly $\kappa$ players are using one of their $\kappa$ best actions. It turns out that this property is preserved with the addition of the global term of \eqref{eq:payoffs} in \cref{def:circuitSAT-reduction}. 

For the proof of \cref{prop:finer-hardness} and the discussion in the following, we define $k\in \mathbb N$ as the maximal value for which $\alpha>k/m$, and note that consequently $\beta\leq k/m$.

Membership in $\NP$ is immediate, as we can guess an action profile and check whether at least a fraction $\alpha$ of players are playing one of their top $k$ actions in polynomial time.
For $\NP$-hardness, we argue that \cref{def:circuitSAT-reduction} is a polynomial-time reduction  from $\circuitSATexists$ to $m$-$\SEexists$. In the reduction, we choose $m$ so that there exists $k$ with $\alpha > k/m \geq \beta$. The main challenge to proving that \cref{def:circuitSAT-reduction} is a polynomial-time reduction  to $m$-$\SEexists$, is to provide a more granular version of \cref{obs:circuitSAT-reduction}.

\begin{observation}
\label{obs:circuitSAT-reduction-granular}
If $f(a) = 1$, then every player is playing their best action. If $f(a) = 0$, then at most a fraction $k/m$ of players are playing one of their $k$ best actions.
\end{observation}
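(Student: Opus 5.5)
We treat the two cases $f(a)=1$ and $f(a)=0$ separately, working throughout with the game $g$ of \cref{def:circuitSAT-reduction} with payoffs \eqref{eq:payoffs}.

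\emph{Case $f(a)=1$.} This is essentially \cref{obs:circuitSAT-reduction}(i). Every player receives at least $m+1$. Any unilateral deviation by a player $(i,\ell)$ changes $a_{i\ell}$, so the players in group $[i]$ no longer play a common action. The deviation therefore makes $f$ equal to $0$, and the deviator's payoff drops to at most $m-1$. Hence every player is strictly best-responding at $a$, which is the first claim.

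\emph{Case $f(a)=0$.} The plan is to argue group by group. Since $n$ groups of equal size $m$ partition the players, it suffices to show that in each group $[i]$ at most $k$ players play one of their $k$ best actions. Summing over groups then gives a fraction at most $k/m$.

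Fix $a$ with $f(a)=0$ and a group $[i]$. Write $s=\sum_{\ell'} a_{i\ell'}$. For player $(i,\ell)$, deviating to $x$ gives the global term $(m+1)f(x,a_{-i\ell})$ plus the local term $(\ell+s-a_{i\ell}+x)\bmod m$. As $x$ ranges over $\{0,\dots,m-1\}$, the local term runs through a bijection onto $\{0,\dots,m-1\}$, and the current action has local value $r_\ell:=(\ell+s)\bmod m$.

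There are two subcases for a given player.
\begin{enumerate}[leftmargin=*, labelindent=0pt,label=(\roman*)]
\item No deviation of $(i,\ell)$ makes $f=1$. Then the payoffs over her actions are exactly the values $0,\dots,m-1$. Her current action has exactly $m-1-r_\ell$ strictly better alternatives, so it is among her top $k$ if and only if $r_\ell\ge m-k$.
\item Some deviation $x^*$ makes $f=1$. That action's payoff exceeds all others, so the current action (which has $f=0$) has at least one strictly better alternative. The remaining actions $x\neq x^*$ still have $f=0$ and distinct local values. Thus the number of actions strictly better than the current one is at least $m-1-r_\ell$, and the current action is top $k$ only if $r_\ell\ge m-k$.
\end{enumerate}
In subcase (ii), the global bonus can only push the current action down, and it can never create ties with the current action. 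So in both subcases, player $(i,\ell)$ is in her top $k$ only if $r_\ell\in\{m-k,\dots,m-1\}$.

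Since $\ell\mapsto r_\ell$ is a bijection from $\{1,\dots,m\}$ onto $\{0,\dots,m-1\}$ (exactly as in \cref{obs:circuitSAT-reduction}(ii)), at most $k$ players per group satisfy this condition. This gives the bound.

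The main obstacle is subcase (ii). A deviation switching $f$ to $1$ makes the ranking non-harmonic, and one must check that this never helps the current action. The key point is that $f(a)=0$ is fixed at the current action, so the bonus only ever accrues to alternatives.

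Finally, combining the two cases gives hardness. If $C$ is satisfiable, all players best-respond at the corresponding profile. If $C$ is unsatisfiable, every profile has at most a $k/m<\alpha$ fraction of players in their top $k$. Since $\beta\le k/m$, top $\beta$ means top $\lceil\beta m\rceil\le k$, so the same bound holds for top $\beta$ actions.
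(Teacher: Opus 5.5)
Your proof is correct, but it is organised differently from the paper's.

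\textbf{The paper's route.} The paper splits the $f(a)=0$ case at the level of the whole profile. First it treats profiles where no player can switch $f$ to $1$ by a unilateral deviation. There every player's payoffs form a permutation of $\{0,\dots,m-1\}$, so exactly $k$ players per group are top-$k$. Otherwise it argues that at most one group contains players able to switch $f$ to $1$. It treats $m=2$ separately, shows that for $m\ge 3$ such a player is unique, and then splits into two subcases according to whether that player's local value is below $m-k$ or not.

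\textbf{Your route.} You instead prove a single per-player necessary condition: being top-$k$ forces $r_\ell=(\ell+s)\bmod m\ge m-k$. This holds whether or not the player can switch $f$. The reason is that at a profile with $f(a)=0$ the current action's global term is $0$, so the $m-1-r_\ell$ alternatives with larger local value are strictly better regardless of their own global term. The bijection $\ell\mapsto r_\ell$ then caps every group at $k$ top-$k$ players.

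\textbf{Comparison.} Your argument is shorter and uniform across groups and players. It needs none of the uniqueness claims about which groups or players can switch $f$, nor the $m=2$ versus $m\ge 3$ split. Even your remark that the switching action $x^*$ is unique is unnecessary, since the bound ``at least $m-1-r_\ell$ strictly better alternatives'' holds in any case. The paper's route gives slightly more information, namely equality (exactly $k$ per group) in groups where no player can switch $f$. However, only the upper bound is used in \cref{prop:finer-hardness}. Your closing translation from top $k$ to top $\beta$ via $\lceil\beta m\rceil\le k$ matches the paper's choice of $k$.
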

\begin{proof}[Proof of \Cref{obs:circuitSAT-reduction-granular}]
If $f(a) = 1$, then \cref{obs:circuitSAT-reduction} tells us that every player is playing their best action. So in particular, at least $\alpha n$ players are using one of their $k$ best actions.

So suppose from now on that $f(a) = 0$. If we can show that at most $n k$ players are using one of their $k$ best actions, we can apply the same reasoning as in the proof of \cref{thm:main} to show that \cref{def:circuitSAT-reduction} is a valid polynomial-time reduction  from $\circuitSATexists$ to $m$-$\SEexists$: the only action profiles $a$ in which at least $\alpha$ fraction of players are playing one of their $k$ best actions are those for which $f(a) = 1$, and these exist if and only if the circuit $C$ is satisfiable.

Now suppose no player can unilaterally deviate to force $f(a'_{i\ell}, a_{-i\ell}) = 1$. Thus, by construction of the local term in \eqref{eq:payoffs}, each player's mapping from actions to payoffs is a permutation on $\{0, \ldots, m-1 \}$. Secondly, within any group of players $(i, 1), \ldots, (i, m)$, for each possible payoff $p \in \{1, \ldots, m\}$ there exists a player $(i,\ell)$ who achieves payoff $p$. It follows that exactly $k$ players in each group play one of their $k$ best actions. As there are $n$ groups, $nk$ players play one of their $k$ best actions in total.

\newcommand{\hati}{j}

Finally, suppose there exists a player $(\hati, \hat{\ell})$ who can deviate to force $f(a'_{\hati\hat{\ell}}, a_{-\hati\hat{\ell}}) = 1$. We notice that there is always at most one group $\hati$ which contains players who can achieve this. Also, by the same argument as above, exactly $k$ players in each of the other groups $i \neq \hati$ play one of their $k$ best actions. So we focus on counting the number of players the group $(\hati,1), \ldots, (\hati,  m)$ (which we call group $\hati$) who are playing one of their $k$ best actions. If $m = 2$, if both players $(\hati,1),(\hati,2)$ may deviate to force $f(a'_{\hati\hat{\ell}}, a_{-\hati\hat{\ell}}) = 1$, then they are both worst responding and we are done. If instead $m \geq 3$, there is necessarily only one such player $(\hati,\hat{\ell})$. We thus focus on the case when there exists a unique such player $(\hati,\hat{\ell})$, and distinguish between two subcases. Recall that the local term of player $u_{\hati\ell}$'s payoffs is
\[
\left( \ell + \sum_{\ell=1}^m a_{\hati\ell}\right) \mkern-15mu \mod m,
\]
and the value of this term can range from $0$ to $m-1$ (cf. Equation~\eqref{eq:payoffs} in \cref{def:circuitSAT-reduction}).

\textbf{Case 1:} Suppose $\hat{\ell} + \sum_{\ell = 1}^m a_{\hati\ell} < m-k$. Then exactly $k$ of the remaining players in group $\hati$ are playing one of their $k$ best actions, by the same argument as above. Player $(\hati, \hat{\ell})$ can improve over her current action $a_{\hati\hat{\ell}}$ by playing the action that flips $f(a)$ to $1$, or playing one of the $k$ actions for which her local term is greater or equal to $m - k$. The flipping action and the $k$ actions may intersect but, regardless, the action $a_{\hati\hat{\ell}}$ is not among her $k$ best actions. It follows that exactly $k$ players play one of their $k$ best actions in action profile $a$.

\textbf{Case 2:} Suppose $\hat{\ell} + \sum_{\ell = 1}^m a_{\hati\ell} \geq m-k$. By the same argument as above, there exist exactly $k-1$ other players in group $i$ who are playing one of their $k$ best actions. So, together with $(\hati, \hat{\ell})$, at most $k$ players are playing one of their $k$ best actions in profile $a$.
\end{proof}

With this observation, the proof of \cref{prop:finer-hardness} is now straightforward.
\begin{proof}[Proof of \cref{prop:finer-hardness}]
Let $C$ be a Boolean circuit with $n$ inputs. If $C$ is satisfiable, then there exists an action profile for the game with $f(a) = 1$, and by \cref{obs:circuitSAT-reduction-granular} all $nm$ players play their best response (and so one of their $k$ best responses). If $C$ is unsatisfiable, then $f(a) = 0$ for all action profiles, and so \cref{obs:circuitSAT-reduction-granular} tells us that at most $nk$ players play one of their $k$ best actions. It follows that $C \in \circuitSATexists$ if and only if $g \in m$-$\SEexists$.
\end{proof}

We can also extend the hardness result of \cref{prop:finer-hardness} to the counting version of the problem. 

\begin{corollary}\label{cor:last}
Fix $\alpha, \beta \in (0,1)$ and integer $m \geq 2$ such that $\alpha>\lceil \beta m\rceil/m$. The counting problem $m$-$\SEcount$ of determining, in a given game $g\in \Games$ with $m$ actions per player, the number of action profiles for which at least a fraction $\alpha$ of players play a top $\beta$ action is $\Pcounthard$.
\end{corollary}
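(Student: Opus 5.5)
We reuse \cref{def:circuitSAT-reduction} as a parsimonious reduction from $\circuitSATcount$ to $m$-$\SEcount$, in the same way that \cref{cor:NumberHashP} was derived from \cref{thm:main}. Given a circuit $C$ with $n$ inputs, we construct the game $g$ with $nm$ players $(i,\ell)$ and $m$ actions each, using the payoffs in \eqref{eq:payoffs}. As in the proof of \cref{prop:finer-hardness}, we let $k$ be the largest integer with $\alpha > k/m$. The hypothesis $\alpha > \lceil \beta m\rceil/m$ then gives $\lceil \beta m \rceil \le k$. Consequently, a player playing a top $\beta$ action is playing one of her $\lceil\beta m\rceil \le k$ best actions.

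The key step is to classify every action profile $a$ by the value of $f(a)$, using \cref{obs:circuitSAT-reduction-granular}.

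If $f(a)=0$, then at most $nk$ of the $nm$ players play one of their $k$ best actions. Hence at most $nk$ players play a top $\beta$ action. Since $nk/(nm) = k/m < \alpha$, we have $nk < \lceil \alpha nm \rceil$, so $a$ is not counted.

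If $f(a)=1$, every player is best responding, and a best response is always a top $\beta$ action. Indeed, zero actions are strictly better, and $0 < \beta m$ because $\beta>0$. So all $nm \ge \lceil \alpha nm\rceil$ players qualify, and $a$ is counted.

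It follows that the profiles counted by $m$-$\SEcount$ are exactly the profiles with $f(a)=1$. By the definition of $f$, these are the profiles in which each group $(i,1),\dots,(i,m)$ plays a common bit $\tau_i\in\{0,1\}$ with $C(\tau)=1$. The map $\tau \mapsto a$ with $a_{i\ell}=\tau_i$ is therefore a bijection between satisfying assignments of $C$ and counted profiles. The count is thus preserved exactly, and the pull map is the identity on integers.

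Finally, the construction is polynomial, as already noted in the proof of \cref{thm:main}. Since $\circuitSATcount$ is $\Pcountcomplete$, $m$-$\SEcount$ is $\Pcounthard$. For completeness, membership in $\Pcount$ also holds: given a profile, one can verify in polynomial time how many players play a top $\beta$ action, using $m$ utility evaluations per player. So the problem is in fact $\Pcountcomplete$, although only hardness is claimed.

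The only delicate point is the $f(a)=0$ case, and it is fully handled by \cref{obs:circuitSAT-reduction-granular} together with the strict inequality $\alpha > k/m$. The remaining check is routine: strictness is what excludes the boundary case $nk = \lceil \alpha nm\rceil$.
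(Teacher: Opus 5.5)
Your proof is correct and follows the paper's argument. You reuse the $\circuitSATexists$ reduction, use \cref{obs:circuitSAT-reduction-granular} to show that exactly the profiles with $f(a)=1$ are counted, and conclude via the bijection with satisfying assignments, which gives a parsimonious reduction from $\circuitSATcount$. Your write-up is more careful than the paper's: you explicitly verify $\lceil \beta m\rceil \le k$, that a strict best response is a top $\beta$ action, and that the threshold is $\lceil \alpha nm\rceil$ for $nm$ players, where the paper loosely writes $\alpha n$.
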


\begin{proof}[Proof of \Cref{cor:last}] We use the same definition of $k$ as for the proof of \cref{prop:finer-hardness}.
We have observed above that an action profile in the reduced game $g$ has at least $\alpha n$ players playing one of their $k$ best actions if and only if $f(a) = 1$. It is straightforward that there is a one-to-one correspondence between satisfying truth assignments of $C$ and action profiles $a$ with $f(a) = 1$. Hence, $m$-$\SEcount$ is $\Pcounthard$.
\end{proof}

\section{Discussion}

We have shown that even the weakest meaningful individual rationality requirement---that no player plays a worst response---is computationally intractable if required for all players. In general games, the decision, search, and counting problems for no-worst-response action profiles are computationally equivalent to their pure Nash equilibrium counterparts ($\NPcomplete$, $\NPhard$, and $\Pcountcomplete$, respectively). This hardness persists in potential games, where finding a no-worst-response action profile is $\PLScomplete$ despite the simple structure of potential games and the guaranteed existence of pure Nash equilibria and thus no-worst-response outcomes. 

Our results suggest that, while not playing one's worst action may be straightforward in a single agent setting, requiring such behaviour for all players in a game turns out to be prohibitively difficult. This carries implications for what outcomes one can expect to arise in games. Solution concepts in games are usually founded on the premise that they would be played by rational players, or that an analyst recommends an action profile to which the players (after consideration) adhere to, or as an implicit or explicit agreement found by players after communication, or as arising from a decentralised learning procedure. Our results imply that, regardless of which of these foundations one considers, either some player or the analyst has to overcome a computationally intractable problem if the solution concept implements only no-worst-response outcomes. This computational intractability violates a basic plausibility requirement of decision-making \citep*{camara}. 

From this perspective, our results are sobering. However, it is important to note that we considered worst-case complexity; that is, the maximal complexity of a given problem across all instances in a given game class. 
A natural route forward is to consider less penalizing notions such as average-case complexity, randomised solutions, as well as average-case dynamic foundations:

\paragraph{Average-case complexity.} Our results establish worst-case hardness in general games and potential games. However, many games encountered in practice may not exhibit worst-case complexity. What can be said about the average-case complexity of finding  action profiles with universal rationality guarantees? Are there natural distributions over games for which the problem becomes tractable with high probability?

\paragraph{Randomised solutions.} We have focused on solutions in pure strategies. Does randomization, that is, considering mixed actions, reduce complexity? In this regard, the relationship between $\epsilon$-equilibria and no-worst-response action profiles deserves further investigation.

\paragraph{Average-case dynamic foundations.}  A natural question is whether simple learning dynamics converge to no-worst-response action profiles more quickly than to Nash equilibria. Although our $\PLS$-completeness result for potential games implies that worst-case convergence times for local improvement dynamics are exponential, average-case convergence behaviour may differ.

\singlespacing{\small
\bibliographystyle{plainnat}
\bibliography{references}
}

\end{document}